\begin{document}
\newtheorem{ach}{Achievability}
\newtheorem{con}{Converse}
\newtheorem{definition}{Definition}
\newtheorem{theorem}{Theorem}
\newtheorem{lemma}{Lemma}
\newtheorem{example}{Example}
\newtheorem{cor}{Corollary}
\newtheorem{prop}{Proposition}
\newtheorem{conjecture}{Conjecture}
\newtheorem{remark}{Remark}
\title{Reconfigurable Intelligent Surface Aided Hybrid Beamforming: Optimal Placement and Beamforming Design}\author{\IEEEauthorblockN{Najam Us Saqib,~\IEEEmembership{Graduate Student Member,~IEEE}, Shumei Hou,\\ Sung Ho Chae,~\IEEEmembership{Member,~IEEE}, and Sang-Woon Jeon,~\IEEEmembership{Member,~IEEE}
\thanks{ This paper will be presented in part at the IEEE International Conference on Communications (ICC), May 2023 \cite{myiccpaper}.}
\thanks{N. U. Saqib and S. Hou are with the Department of Electrical and Electronic Engineering, Hanyang University, Ansan 15588, South Korea (e-mail: \{najam, h2021156963\}@hanyang.ac.kr).}
\thanks{S. H. Chae is with the Department of Electronic Engineering, Kwangwoon University, Seoul 01897, South Korea (e-mail: sho.chae00@gmail.com).}
\thanks{S.-W. Jeon, \emph{corresponding author}, is with the Department of Electrical and Electronic Engineering, Hanyang University, Ansan 15588, South Korea (e-mail: sangwoonjeon@hanyang.ac.kr).}
}}
 \maketitle
\vspace{-0.4in}
\begin{abstract}
We consider reconfigurable intelligent surface (RIS) aided sixth-generation (6G) terahertz (THz) communications for indoor environment in which a base station (BS) wishes to send independent messages to its serving users with the help of multiple RISs. For indoor environment, various obstacles such as pillars, walls, and other objects can result in no line-of-sight signal path between the BS and a user, which can significantly degrade performance. To overcome such limitation of indoor THz communication, we firstly optimize the placement of RISs to maximize the coverage area. Under the optimized RIS placement, we propose 3D hybrid beamforming at the BS and phase adjustment at RISs, which are jointly performed at the BS and RISs via codebook-based 3D beam scanning with low complexity. Numerical simulations demonstrate that the proposed scheme significantly improves the average sum rate compared to the cases of no RIS and randomly deployed RISs. It is further shown that the proposed codebook-based 3D beam scanning efficiently aligns analog beams between BS--user links or BS--RIS--user links and, as a consequence, achieves the average sum rate close to that of coherent beam alignment requiring global channel state information.
\end{abstract}

\begin{IEEEkeywords}
Hybrid beamforming, machine learning, reconfigurable intelligent surface (RIS), RIS placement optimization, terahertz (THz) communication. 
\end{IEEEkeywords}
 \IEEEpeerreviewmaketitle

\section{Introduction}
For the future sixth-generation (6G) communication systems, extremely high data rate services such as holographic displays, auto driving,  interactive online gaming, and tactile and haptic internet applications are expected to begin in earnest~\cite{Dang:20,Giordani:20,TATARIA:21,Chae-Jeon:21}. 
In order to support such soaring data traffic required from new and emerging applications, 6G communication systems are required to support data rates more than $1$ Tbps~\cite{ref1}, which is at least $100$ times higher than that of the fifth-generation (5G) communication.
For instance, holograms typically require data rates of tens of Mbps to a few Tbps, and the data rates that access points in metro stations or shopping malls need to support could be up to $1$ Tbps~\cite{TATARIA:21}.
To accommodate these data demands in 6G systems, terahertz (THz) communication technologies have been actively developed due to the advantage of being able to exploit a wide frequency band~\cite{Koenig:13,Chen:19,Elayan:20}.

Even though the utilization of broadband is essential to provide such improved data rates, there are some challenging problems to be solved in order to integrate ultra-high frequency communications into 6G systems.
For ultra-high frequency communications using millimeter wave (mmWave) or THz  frequency bands, it has been reported that the received signal can be extremely weak due to significant path loss~\cite{Rappaport:17,Hemadeh18} compared to the conventional communication bands, which might severely restrict the coverage of 6G systems.     
Moreover, at ultra-high frequencies such as mmWave and THz, diffuse reflection becomes significant, and as a consequence, the effective number of multi-path components generally decreases as the operating frequency increases.
Therefore, communication channels become sensitive to network environment and also can be rank-deficient. More importantly, due to these poor-scattering characteristics of ultra-high frequency bands, it will be very difficult to secure stable communication channels when line-of-sight (LoS) channel components are blocked by buildings, walls, or other obstacles~\cite{ref2,ref3,ref4,ref5}.     

In order to overcome such limitation, hybrid beamforming techniques that can improve both cell coverage and spectral efficiency of current 5G systems have been actively studied~\cite{Rappaport:17,Hemadeh18,Giordani:19,Nadeem19, Jeon1, Jeon2, Jeon3}. In hybrid antenna array structures, a subset of antenna elements (or sub-array) is connected to an antenna port or a radio frequency (RF) chain so that the implementation cost and complexity can be efficiently reduced.
As classified in~\cite{Molisch:17,Song:20}, hybrid antenna structures can be categorized into fully-connected and sub-array structures. For the first case, each antenna port is connected to all antenna elements, whereas each antenna port is connected to a disjoint sub-array for the second case. 
To reduce the computational complexity of hybrid beamforming designs, a two-step approach has been widely adopted, in which the analog beamforming part usually performs coherent beamforming to increase the received signal-to-noise ratio (SNR) and the digital processing part focuses on eliminating interference between multiple data streams~\cite{Nadeem19,Zhang:20,Ayach:14,Liang:14, Alkhateeb:15, Ni:16,Rajashekar:17, Hu:18}. Several practical low-complexity analog and digital beamforming strategies have been summarized in~\cite{Nadeem19, Zhang:20}. In particular, general downtilting analog beamforming methods for both single-cell and multi-cell systems have been stated in~\cite{Nadeem19}  and performance comparisons for several beamforming algorithms have been demonstrated in~\cite{Zhang:20}.

Although hybrid beamforming techniques have been successfully adopted to 5G systems and provided an extended coverage for 5G mmWave communications~\cite{Rappaport:17,Hemadeh18,Giordani:19,Nadeem19,Zhang:20}, a straightforward extension of such techniques to 6G systems utilizing much higher frequency bands is expected to have several limitations. Firstly, as the operating frequency band increases, extremely massive multiple-input and multiple-output (MIMO) with much more antenna elements will be required to offer cell coverage for 6G THz communications, but it is expected at the same time that the system complexity and cost increase significantly, making it impossible to implement in practical systems.
Secondly, hybrid beamforming with the extremely massive MIMO setup cannot fundamentally resolve the signal blockage phenomenon of poor-scattering or LoS-only channels. A promising approach to address these technical challenges to provide improved coverage and throughput for ultra-high frequency communications is the introduction of reconfigurable intelligent surfaces (RISs) into 6G systems.

Unlike the conventional relays or reflectors, RIS is a planar surface consisting of a large number of reflecting components, each of which is capable of adjusting the angle and/or amplitude of its reflected signal in a controllable and intelligent manner~\cite{ref6, ref7, ref9, ref10}. 
Thanks to such controllable degrees of freedom, RIS can be utilized for a variety of purposes, including to address signal blockage, the key hole effect, and multi-path loss or penetration loss, which are essentially required to secure stable channels for ultra-high frequency communications. There are several ways to implement RIS using passive reflectarrays or metasurfaces connected with electronic devices such as positive-intrinsic-negative (PIN) diodes, field-effect transistors (FETs), or micro-electro-mechanical system (MEMS) switches~\cite{ref7, ref11}.

Because of various potentials of RIS described above, it has been considered as one of the fundamental emerging components for 6G communication networks. In order to fully utilize RISs and successfully integrate them into 6G beamforming systems, the joint design of multi-antenna beamforming at base stations (BSs) and/or users and RIS reflecting procedure has been actively studied in the literature. In~\cite{ref12,ref13,tao:21}, a RIS-aided multiple-input and single-output (MISO) system in which a single multi-antenna BS serves multiple single-antenna users with the help of a RIS has been considered. In particular, the total transmit power minimization problem has been solved subject to the signal-to-interference-plus-noise ratio (SINR) constraint for each user in~\cite{ref12}, and the maximum SINR achieved by the worst-case user based on the linear precoder under the RIS power constraint has been solved in~\cite{ref13}. A deep neural network to parameterize the mapping from the received pilots to optimized RIS reflection coefficients has been proposed in~\cite{tao:21}.
The RIS-aided beamforming techniques have also been extended to MIMO and massive MIMO systems~\cite{ref14,ref15,9906810,ref16,ref18}. Specifically, the capacity characterization of RIS-aided MIMO systems has been studied in~\cite{ref14} by jointly optimizing the RIS reflection coefficients and the MIMO transmit covariance matrix, and an iterative algorithm that based on the projected gradient method has been proposed in~\cite{ref15}. 
A general MIMO setup consisting of hybrid beamforming antenna structures at both the transmitter and receiver has been considered in~\cite{ref18}. To reduce the system complexity of THz communication, a hybrid beamforming technique based on RIS angle quantization and beam training has been considered in~\cite{ref20,ref22,ref23} and three-dimensional (3D) beamforming using a two-dimensional (2D) planar antenna array has been studied in~\cite{ref19}.  

The previous works mentioned above mainly considered the joint optimization of hybrid beamformers and RIS reflection coefficients assuming that a single RIS is deployed at a fixed position. There are several works considering multiple RISs, which focus on user scheduling and geometry-based user grouping~\cite{ref29,ref31,ref32} and coverage analysis in~\cite{9174910,9868205}. 
For RIS-enabled systems, the optimal placement of RISs is critical to improving  overall system performance, especially for mmWave or THz frequency bands that suffer from multi-path loss, penetration loss, and signal blockage due to obstacles.   
In~\cite{ref27}, a standard setup consisting of a BS and users has been considered to analyze the impact of RIS deployment located near the BS side, near the user side, and two RISs at both sides.  
In spite of the importance of RIS placement optimization to overall system performance, there exist only a few recent works considering the optimal RIS placement for aerial RISs~\cite{ref26, ref33} and for the indoor THz communication setup~\cite{ref21}. 
More specifically, the worst-case SNR maximization over all locations in a target area has been considered for a single aerial RIS in~\cite{ref26} and the sum rate maximization in which a single RIS supports the communication between a BS and multiple users has been considered in~\cite{ref21}.

\subsection{Contribution}
As expected by several 6G vision and requirement reports~\cite{Dang:20,Giordani:20,ref1,TATARIA:21,Koenig:13,Chen:19,Elayan:20}, one of the essential roles of THz communications in 6G systems will be to provide relatively short-range but extremely high data rate services in a crowded indoor environment. However, due to various indoor obstacles such as walls, pillars, furniture, etc.~\cite{ref21,ref24,ref25,ref28}, it is difficult to guarantee unobstructed LoS channels between transmitters and receivers, which become crucially important for ultra-high frequency communications. Therefore, introducing RISs in 6G communication systems will be very effective not only to resolve signal blockage but also to relieve rank-deficient channel conditions and multi-path losses for mmWave or THz communications.   
Motivated by such potentials of RISs, in this paper, we focus on RIS-aided indoor mmWave or THz communication systems in which a BS equipped with a hybrid beamforming antenna system wishes to send independent messages to multiple single-antenna users with the help of multiple RISs. We highlight our contributions as follows:

\begin{itemize}
\item A generalized RIS-aided hybrid beamforming framework incorporated with multiple RISs and multiple users has been considered. Unlike the most previous works, for instance~\cite{ref16,ref18,ref20,ref22,ref23,ref19,ref29,ref31,ref32}, which jointly optimize transmit beamforming and RIS reflecting procedure when the positions of RISs are given, we extend the scope of optimization in this paper to include not only transmit beamforming and RIS reflecting procedure but also the placement of RISs that is crucially important for indoor environment.     
\item In order to efficiently handle such an extended optimization setup, we introduce the RIS placement optimization that is able to maximize the long-term or expected sum rate of multiple users. In particular, we propose a novel construction method of RIS candidate positions reflecting indoor obstacle characteristics and then propose deep learning based selection algorithms searching among the RIS candidate positions with low complexity.  
\item We propose a systematic beam scanning method for constructing the analog beamforming matrix of the BS and the reflection matrices of multiple RISs, which can be implemented from the current 5G hybrid beamforming systems utilizing channel state information (CSI) with reduced size of channel matrices for digital beamforming.
\item Numerical results demonstrate that the proposed two-step optimization, i.e., the RIS placement optimization to maximize the long-term sum rate and then optimization of transmit beamforming and RIS reflecting procedure in real time, is efficient for indoor mmWave or THz communication systems. It is further shown that the proposed analog beamforming and RIS reflecting coefficient construction with reduced CSI can achieve sum rates close to those achievable by coherent beamforming with full CSI.   
\end{itemize}

\subsection{Paper Organization and Notation}

The rest of the paper is organized as follows. In Section \ref{sec:PF}, we describe the problem formulation and introduce the joint optimization of hybrid beamformers and RIS placement and its reflecting procedure.
In Section \ref{sec:scheme_ris_optimization}, we firstly focus on the RIS placement optimization to maximize the coverage area. Then, in Section \ref{sec:scheme_ris_optimization}, a systematic construction of hybrid beamformers and RIS reflecting procedure to maximize the sum rate is considered.
In Section \ref{sec:performance_evaluation}, we numerically evaluate the achievable coverage area and sum rate of the proposed scheme in various aspects and compare them with several benchmark schemes. Finally, Section \ref{sec:conclusion} concludes the paper.

\emph{\bf Notation}: Throughout the paper, vectors and matrices are denoted by bold lowercase and bold uppercase letters, respectively. Let $\|\mathbf{A}\|$, $\mathbf{A}^{\dagger}$, $\mathbf{A}^{-1}$, and $\mathbf{A}^{T}$ denote the Frobenius-norm, the inverse, the complex conjugate transpose, and the transpose of $\mathbf{A}$, respectively. The block diagonal matrix consisting of a set of vectors $\mathbf{a}_1$ to $\mathbf{a}_n$ is denoted by $\operatorname{diag}\{\mathbf{a}_1,\cdots,\mathbf{a}_n\}$. 
The $n\times n$ identity matrix is denoted by $\mathbf{I}_n$. 
For vectors $\mathbf{a}$ and $\mathbf{b}$, $\mathbf{a}\otimes\textbf{b}$ denotes the Kronecker product of $\mathbf{a}$ and $\mathbf{b}$. The $n\times 1$ all-zero vector is denoted by $\mathbf{0}_n$.
For a complex number $a$, the notation $|a|$ and $\angle a$ mean the absolute value and the phase angle of $a$, respectively.  For a finite set $\mathcal{A}$, $|\mathcal{A}|$ means the cardinality of $\mathcal{A}$. Let $[1:n]=\{1,2,\cdots,n\}$ and $\jmath=\sqrt{-1}$. The circularly symmetric complex Gaussian distribution with mean $\mu$ and variance $\sigma^2$ is denoted by $\mathcal{CN}(\mu, \sigma^2)$ and the uniform distribution over $a$ and $b$ is denoted by $\mathcal{U}(a , b)$. The indicator function is denoted by $\mathbf{1}(\cdot)$.

\section{Problem Formulation} \label{sec:PF}
In this section, we state the network and channel models considered in this paper. Then we introduce the sum rate maximization problem for RIS-aided hybrid beamforming systems.

\subsection{Network Model} \label{subsec:Network_Model}
We consider a single-cell indoor downlink cellular network consisting of a BS and $J$ RISs deployed in a 3D space of $\mathcal{S}=[0,S_x] \times [0,S_y]\times [0,S_z]$ depicted in Fig. \ref{Fig 1}. 
The BS wishes to send independent messages to $K$ users with the help of $J$ RISs. 
Denote the positions of the BS, RIS $j$, and user $k$ by $\mathbf{q}_0\in\mathcal{S}$, $\mathbf{q}_j\in\mathcal{S}$, and $\mathbf{q}^{[\text{u}]}_k\in\mathcal{S}$ respectively, where $j\in[1:J]$ and $k\in[1:K]$.
For simplicity, we assume that the BS is located on the ceiling and each RIS is located on one of four side walls.

\begin{figure}[t] \centering 
\includegraphics[scale=0.75]{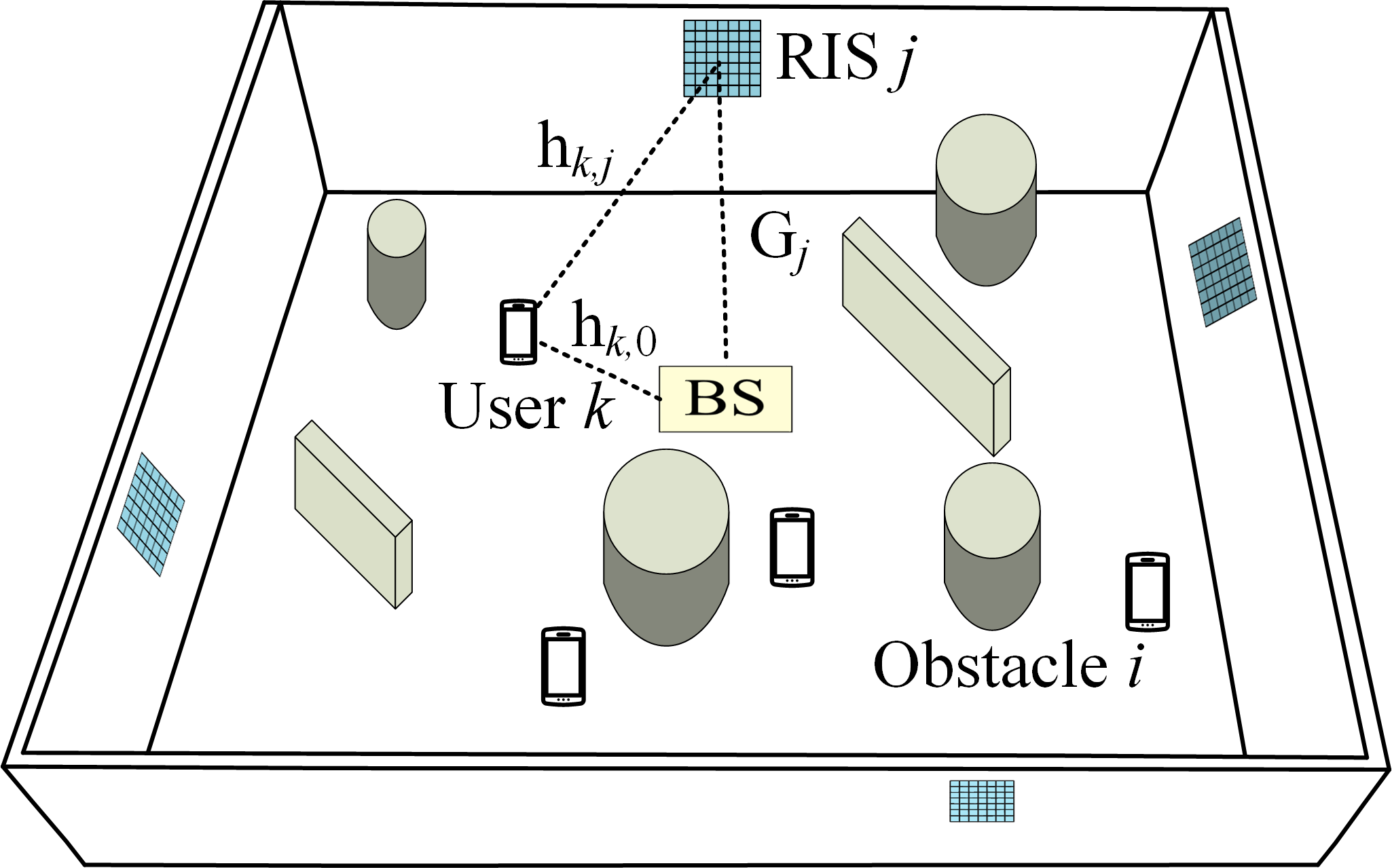} 
\caption{Indoor network model consisting of a BS and multiple RISs.}\label{Fig 1}
\end{figure}

The BS is equipped with a 2D uniform planar hybrid antenna array consisting of $L=L_1L_2$ sub-arrays, each of which is connected to a separate RF chain as illustrated in Fig. \ref{Fig 2}. Under such sub-array antenna structure, each sub-array consists of $M=M_1M_2$ antenna elements deployed in a 2D area. 
Hence, the number of total RF chains and the number of total antenna elements are given by $L$ and $LM$, respectively. Each user is equipped with a single received antenna.
We further assume that each RIS consists of $N=N_1N_2$ reflection elements.
For notational convenience, denote the $(l_1,l_2)$th sub-array by sub-array $l=(l_1-1)L_1+l_2$, where $l_1\in[1:L_1]$ and $l_2\in[1:L_2]$.
Similarly, denote the $(m_1,m_2)$th antenna element in each sub-array by antenna element $m=(m_1-1)M_1+m_2$ and the $(n_1,n_2)$th reflection element in each RIS by reflection element $n=(n_1-1)N_1+n_2$, where $m_1\in[1:M_1]$, $m_2\in[1:M_2]$, $n_1\in[1:N_1]$, and $n_2\in[1:N_2]$.

\begin{remark}
For notational convenience, we will interchangeably use both 2D indices and the corresponding one-dimensional (1D) indices for stating sub-arrays or antenna elements in each sub-array. \hfill$\lozenge$
\end{remark}

\begin{figure}[t] \centering
\includegraphics[scale=0.4]{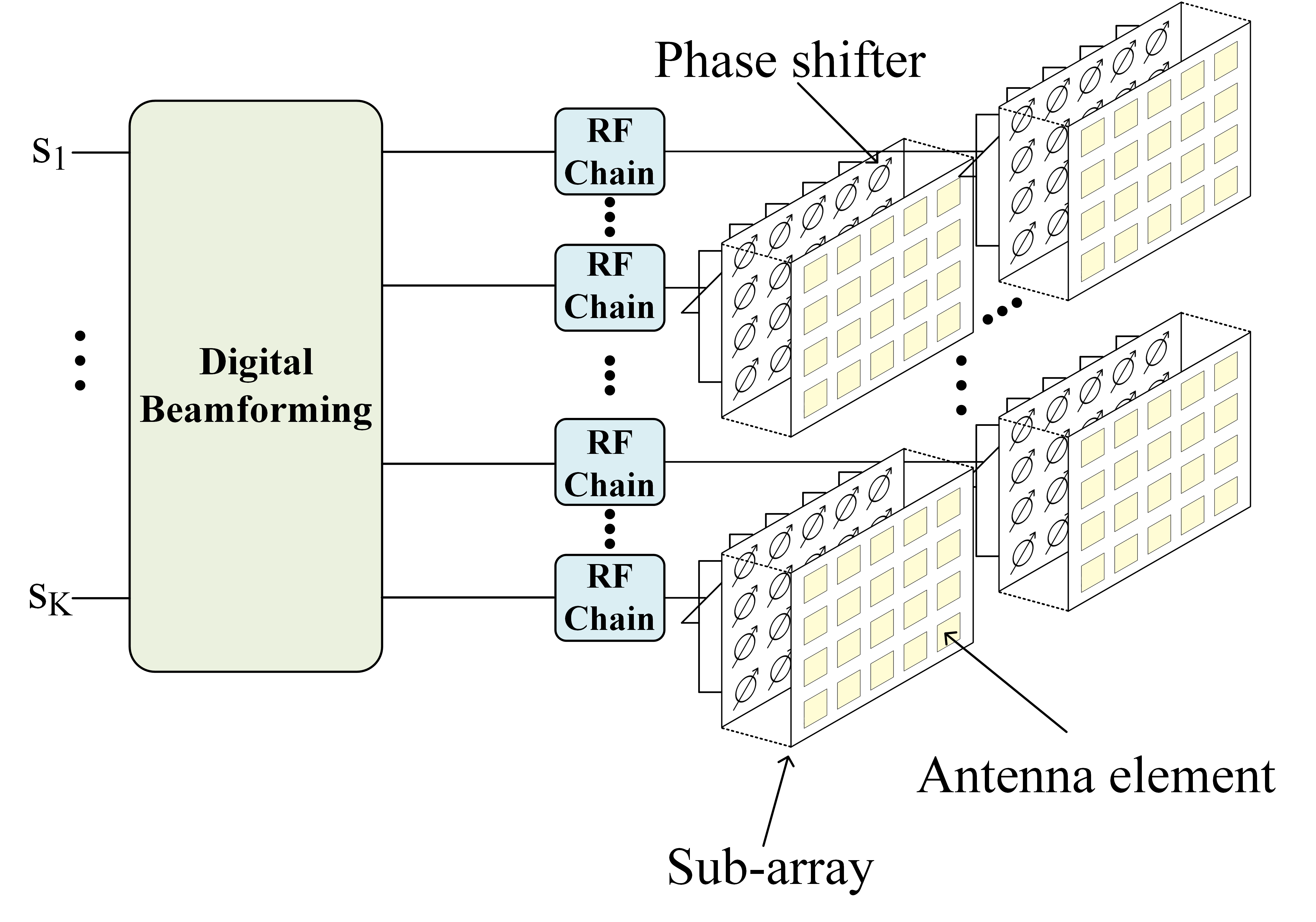}
\caption{Hybrid beamforming antenna system at the BS.}\label{Fig 2}
\end{figure}
In this paper, we mainly focus on massive MIMO environment in which the number of RF chains are greater than or equal to the number of users, i.e., $L\geq K$.
Let $\mathbf{F}_{A}\in \mathbb{C}^{LM\times L}$ be the analog beamforming matrix and $\mathbf{F}_{D}\in \mathbb{C}^{L\times K}$ be the digital beamforming matrix. We assume the sub-array antenna structure at the BS in which each sub-array is connected with a separate RF chain. Hence, $\mathbf{F}_{A}$ is represented by
\begin{align} \label{eq:1_0}
\mathbf{F}_{A}=\operatorname{diag}\left\{\mathbf{f}_{A,1},\mathbf{f}_{A,2},\dots,\mathbf{f}_{A,L}\right\},
\end{align}
where $\mathbf{f}_{A,l} \in \mathbb{C}^{M\times 1}$ is the analog beamforming vector of sub-array $l$ with each element of the form $\frac{1}{\sqrt{M}}e^{\jmath\angle{f_{A,l,m}}}$ for $m\in[1:M]$. Here, $f_{A,l,m}$ denotes the $m$th element of $\mathbf{f}_{A,l}$ satisfying that $|f_{A,l,m}|^2=1$.
The detailed construction of $\mathbf{F}_{A}$ and $\mathbf{F}_{D}$ will be given in Section \ref{sec:scheme_beamtraining}. Denote the transmit signal vector of the BS by $\mathbf{x}\in\mathbb{C}^{LM\times 1 }$, which should satisfy the average power constraint $P$, i.e., $E\left[\|\mathbf{x}\|^2\right]\leq P$.
Let $\mathbf{s}=[s_1,\cdots,s_K]\in \mathbb{C}^{K\times 1}$ denotes the information symbol vector, which follows $E[\mathbf{ss^\dagger}] \leq \frac{{P}}{K}\mathbf{I}_{K}$. That is, $s_k$ is the information symbol for user $k$.   
Denote $\mathbf{F}_D=\left[\mathbf{f}_{D,1},\mathbf{f}_{D,2},\dots,\mathbf{f}_{D,K}\right]$, where $\mathbf{f}_{D,k}\in\mathbb{C}^{L\times 1}$ is the digital beamforming vector for user $k$.
We have 
\begin{align} \label{eq:1}
\mathbf{x}=\mathbf{F}_{A}\mathbf{F}_{D}\mathbf{s}.
\end{align}
Then, to meet the average power constraint, $\|\mathbf{F}_D\|^2 \leq K$ should be satisfied. 
Let $\mathbf{\Sigma}_j \in \mathbb{C}^{N\times N}$ be the reflection matrix of RIS $j$, which is given by
\begin{align}\label{eq:3}
 \mathbf{\Sigma}_j=\operatorname{diag}\{e^{\jmath\sigma_{j,1}},\dots,e^{\jmath\sigma_{j,N}}\},
\end{align}
where $\sigma_{j,n}\in[0,2\pi)$ is the phase reflection coefficient for reflection element $n$ of RIS $j$. The reflection coefficients are assumed to be controlled by the BS through a dedicated controller.

From \eqref{eq:1} and \eqref{eq:3}, the received signal at user $k$ is represented by
\begin{align} \label{eq:2}
y_k=\mathbf{h}_{k,0}\mathbf{x}+\sum_{j=1}^{J}{\mathbf{h}_{k,j}\mathbf{\Sigma}_j\mathbf{G}_{j}}\mathbf{x} +z_k,
\end{align}
where $\mathbf{h}_{k,0}\in\mathbb{C}^{1\times LM}$ is the direct channel vector from the BS to user $k$, $\mathbf{h}_{k,j}\in\mathbb{C}^{1\times N}$ is the channel vector from RIS $j$ to user $k$, $\mathbf{G}_j \in\mathbb{C}^{N\times LM}$ denotes the channel matrix from the BS to RIS $j$, and $z_k$ denotes the additive noise at user $k$, which follows $\mathcal{CN}(0,N_0)$.

\subsection{Indoor mmWave/THz Channel Model} \label{subsec:ch_model}
As illustrated in Fig. \ref{Fig 1}, we consider the indoor environment containing multiple obstacles, which will be explained in detail in Section \ref{sec:scheme_ris_optimization}. Hence, depending on the positions of the BS, RISs, and obstacles, each user might have either the LoS channel or not.  
Furthermore, as the carrier frequency increases, path loss attenuation begins to be more severe. In this case, multi-paths from scattering environments are limited, and rich scattering channel models become inaccurate. For higher frequencies such as mmWave or THz communication, a popular channel model called the Saleh--Valenzuela model is widely adopted \cite{cheng2020low,saleh1987statistical}, which allows us to capture the behavior of channel characteristics at mmWave or THz frequencies. 
Therefore, from \cite{cheng2020low,saleh1987statistical}, the direct channel vector from the BS to user $k$ is given as
\begin{align} \label{eq:5}
\mathbf{h}_{k,0}=&c_{k,0}^{(0)}\mathbf{a}_{\operatorname{BS}}^{(\alpha_0,\beta_0)}\left(\theta_{k,0}^{(t,0)},\phi_{k,0}^{(t,0)}\right)+\frac{1}{\sqrt{Q_{k,0}}}\sum_{q=1}^{Q_{k,0}}c_{k,0}^{(q)}\mathbf{a}_{\operatorname{BS}}^{(\alpha_0,\beta_0)}\left(\theta_{k,0}^{(t,q)},\phi_{k,0}^{(t,q)}\right).
\end{align}
Here the first term denotes the LoS component, represented by channel coefficient $c_{k,0}^{(0)}$, elevation angle $\theta_{k,0}^{(t,0)}$, and azimuth angle $\phi_{k,0}^{(t,0)}$ from the BS  to user $k$. The second term denotes $Q_{k,0}$ Non-LoS (NLoS) components, given channel coefficients $\{c_{k,0}^{(q)}\}_q$, elevation angles $\{\theta_{k,0}^{(t,q)}\}_q$, and azimuth angles $\{\phi_{k,0}^{(t,q)}\}_q$. If there is no LoS due to obstacles, then $c_{k,0}^{(0)}=0$. Otherwise it becomes $c_{k,0}^{(0)}=\frac{1}{\|\mathbf{q}_0-\mathbf{q}^{[\text{u}]}_k\|^{\gamma/2}}$, where $\gamma\geq2$ denotes the path loss exponent. Also, the LoS angles $\theta_{k,0}^{(t,0)}$ and $\phi_{k,0}^{(t,0)}$ are determined by the positions of the BS and user $k$ depicted in Fig. \ref{Fig 3}.
For NLoS components, we assume that $c_{k,0}^{(q)}$, $\theta_{k,0}^{(t,q)}$, and $\phi_{k,0}^{(t,q)}$ follow $\mathcal{CN}({0},\sigma_L^2)$, $\mathcal{U}(- \pi/2 , \pi/2)$, and $\mathcal{U}(- \pi , \pi)$, respectively \cite{cheng2020low,saleh1987statistical}.
\begin{figure}[t] \centering
\includegraphics[scale=0.65]{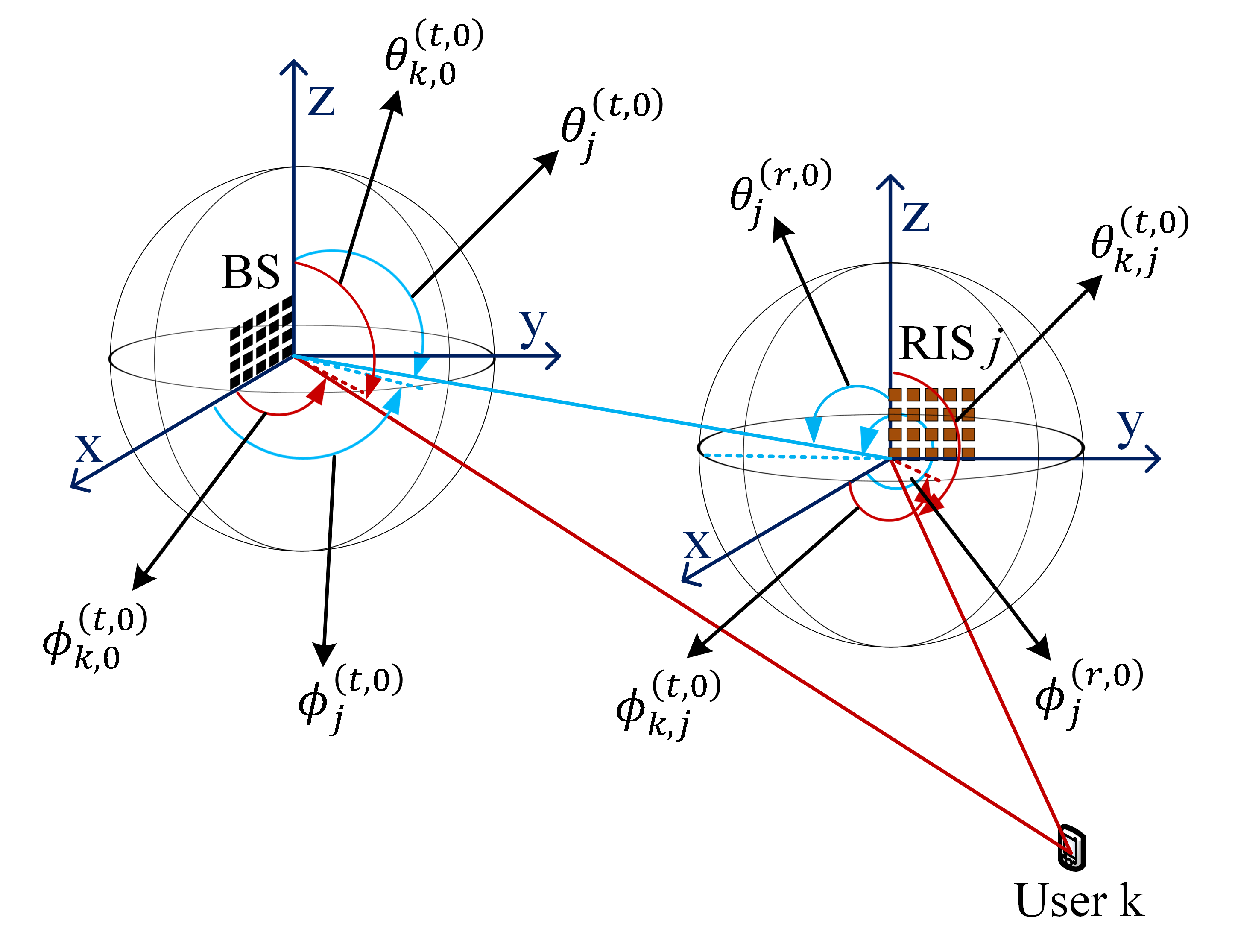}
\caption{Geometries of defining LoS angles for BS--user, RIS--user, and BS--RIS channels.}\label{Fig 3}
\end{figure}
In a similar manner, we define channels from RIS $j$ to user $k$ and channels from the BS to RIS $j$ as

\begin{align} \label{eq:6}
\mathbf{h}_{k,j}=c_{k,j}^{(0)}\mathbf{a}_{\operatorname{RIS}}^{(\alpha_j,\beta_j)}\left(\theta_{k,j}^{(t,0)},\phi_{k,j}^{(t,0)}\right)+\frac{1}{\sqrt{Q_{k,j}}}\sum_{q=1}^{Q_{k,j}}c_{k,j}^{(q)}\mathbf{a}_{\operatorname{RIS}}^{(\alpha_j,\beta_j)}\left(\theta_{k,j}^{(t,q)},\phi_{k,j}^{(t,q)}\right)
\end{align}
and 
\begin{align} \label{eq:7}
\mathbf{G}_{j}&=c_{j}^{(0)}\mathbf{a}_{\operatorname{RIS}}^{(\alpha_j,\beta_j)}\left(\theta_{j}^{(r,0)},\phi_{j}^{(r,0)}\right){\mathbf{a}_{\operatorname{BS}}^{(\alpha_0,\beta_0)}}^\dagger\left(\theta_{j}^{(t,0)},\phi_{j}^{(t,0)}\right) \\ \nonumber &+\frac{1}{\sqrt{Q_j}}\sum_{q=1}^{Q_j}c^{(q)}_{j}\mathbf{a}_{\operatorname{RIS}}^{(\alpha_j,\beta_j)}\left(\theta_{j}^{(r,q)},\phi_{j}^{(r,q)}\right) {\mathbf{a}_{\operatorname{BS}}^{(\alpha_0,\beta_0)}}^\dagger\left(\theta_{j}^{(t,q)},\phi_{j}^{(t,q)}\right),
\end{align}
respectively.

To specify planar array response vectors $\mathbf{a}_{\operatorname{BS}}^{(\alpha_0,\beta_0)}$ and $\mathbf{a}_{\operatorname{RIS}}^{(\alpha_j,\beta_j)}$ in \eqref{eq:5} to \eqref{eq:7}, we introduce a generic form of planar array response vectors as
\begin{align} \label{eq:8}
 \mathbf{a}^{(\alpha,\beta)}\left(\theta,\phi\right) = \mathbf{a}_{{\operatorname{ULA}}}^{(\alpha)}(\theta,\phi) \otimes \mathbf{a}_{{\operatorname{ULA}}}^{(\beta)}(\theta,\phi),
\end{align}
where
\begin{align} \label{eq:9}
\mathbf{a}_{{\operatorname{ULA}}}^{(\alpha)}(\theta,\phi) =\begin{cases}\left[1,e^{\jmath\frac{2\pi}{\lambda}\delta_x\sin{(\theta)}\cos{(\phi)}}, \dots ,e^{\jmath\frac{2\pi}{\lambda}(D_{\operatorname{x}}-1)\delta_x\sin{(\theta)}\cos{(\phi)}}\right]^T &\mbox{ if }\alpha=x,\\
\left[1,e^{\jmath\frac{2\pi}{\lambda}\delta_y\sin{(\theta)}\sin{(\phi)}}, \dots ,e^{\jmath\frac{2\pi}{\lambda}(D_{\operatorname{y}}-1)\delta_y\sin{(\theta)}\sin{(\phi)}}\right]^T &\mbox{ if }\alpha=y,\\
\left[1,e^{\jmath\frac{2\pi}{\lambda}\delta_z\cos{(\theta)}}, \dots ,e^{\jmath\frac{2\pi}{\lambda}(D_{\operatorname{z}}-1)\delta_z\cos{(\theta)}}\right]^T &\mbox{ if }\alpha=z.
\end{cases}
\end{align}
Here, $\lambda$ denotes the wavelength, $D_{\operatorname{x}}$, $D_{\operatorname{y}}$, and $D_{\operatorname{z}}$ are the number of antenna elements along x-, y- and z- directions respectively, and $\delta_{\operatorname{x}}$, $\delta_{\operatorname{y}}$, and $\delta_{\operatorname{z}}$ are the antenna spacing between two adjacent antenna elements along x-, y- and z- directions respectively.
As seen in \eqref{eq:9}, the orientation of the antenna array in the BS or RIS affects to construct $\mathbf{a}_{{\operatorname{ULA}}}^{(\alpha)}(\theta,\phi)$, which can be deployed along the xy-, yz-, or xz-plane. For instance, if the BS antenna array is deployed along the xy-plane, then $\alpha_0=x$ and $\beta_0=y$ in $\mathbf{a}_{\operatorname{BS}}^{(\alpha_0,\beta_0)}$.
In the same manner, we can define $\mathbf{a}_{{\operatorname{ULA}}}^{(\beta)}(\theta,\phi)$ by replacing $\alpha$ in \eqref{eq:9} with $\beta$.

\subsection{Sum Rate Maximization}

For convenience, let 
\begin{align} \label{eq:effective_ch}
\mathbf{w}_{k} =(\mathbf{h}_{k,0}+\sum_{j=1}^{J}{\mathbf{h}_{k,j}\mathbf{\Sigma}_j\mathbf{G}_{j}
})\mathbf{F}_{A} \in  \mathbb{C}^{1\times L }
\end{align}
be the effective channel vector of user $k$ that includes the effect of the analog beamforming at the BS and the reflection processing of $J$ RISs. 
Then, from \eqref{eq:1_0} to \eqref{eq:2}, the received signal at user $k$ is given by
\begin{align} \label{eq:12}
y_k=\mathbf{w}_{k}\mathbf{F}_{D}\mathbf{s} +z_k=\mathbf{w}_{k}\sum_{j=1}^K\mathbf{f}_{D,j}s_{j}.
\end{align}
Hence, the achievable sum rate of $K$ users is represented by
\begin{align}\label{eq:16}
R_{\operatorname{sum}}=\sum_{k=1}^{{K}}\log{\left({1+\frac{P/{K}\|\mathbf{w}_k\mathbf{f}_{D,k}\|^2} {P/{K}\sum_{j=1,j\neq k}^{{K}}\|\mathbf{w}_k\mathbf{f}_{D,j}\|^2+{N_0}}}\right)},
\end{align}
which is a function of $\mathbf{F}_{A}$, $\mathbf{F}_{D},\{\mathbf{\Sigma}_j\}_{j\in[1:J]}$, and $\{\mathbf{q}_j\}_{j\in[1:J]}$.
Note that $R_{\operatorname{sum}}$ depends not only on the analog and digital beamforming at the BS and reflection coefficients at $J$ RISs but also on the set of positions of $J$ RISs, which might be carefully optimized by considering obstacles distributed in the indoor network area. 
Unlike $\mathbf{F}_{A}$, $\mathbf{F}_{D}$, and $\{\mathbf{\Sigma}_j\}_{j\in[1:J]}$, it is quite challenging in practice to adjust $\{\mathbf{q}_j\}_{j\in[1:J]}$ in real time based on the position of users or time-varying channel qualities. Instead, we focus on the RIS placement optimization to maximize the long-term sum rate averaged over the user and channel distributions, whereas $\mathbf{F}_{A}$, $\mathbf{F}_{D}$, and $\{\mathbf{\Sigma}_j\}_{j\in[1:J]}$ are optimized in real time. That is, the sum rate maximization considered in this paper is represented by 
\begin{align} \label{eq:opt_prob}
\max_{\{\mathbf{q}_j\}_{j\in[1:J]}} E\left[\max_{\mathbf{F}_{A},\mathbf{F}_{D},\{\mathbf{\Sigma}_j\}_{j\in[1:J]}}R_{\operatorname{sum}}(\mathbf{F}_{A}, \mathbf{F}_{D},\{\mathbf{\Sigma}_j\}_{j\in[1:J]}; \{\mathbf{q}_j\}_{j\in[1:J]})\right],
\end{align}
where the expectation takes over the distribution of users' position and the corresponding channel distribution.

\section{RIS Placement Optimization} \label{sec:scheme_ris_optimization}

Unlike the conventional beamforming and RIS process design, the optimization problem in \eqref{eq:opt_prob} is quite challenging to solve, mainly because the placement of RISs has to be also optimized considering the distribution of users and the corresponding channels. Hence, in this section, we introduce the coverage area as a simplified performance metric to be used for the RIS placement optimization instead of the expected sum rate given in \eqref{eq:opt_prob}. 
As mentioned in Section \ref{subsec:ch_model}, for mmWave or THz channels, LoS components become dominant and the received signal quality will be severely degraded without LoS components. Therefore, the expected sum rate maximization will be closely related to maximize the LoS region by optimizing the RIS placement assuming that the users are uniformly distributed at random over the entire network area.   

For this purpose, let us first describe a set of $I$ obstacles, in which the signal path from the BS can be blocked by such obstacles. 
We assume the shape of each obstacle is identical over the z-axis as illustrated in Fig. \ref{Fig 1}. Hence the 2D network space $\mathcal{S}=[0,S_x] \times [0,S_y]$ instead of the original 3D space can be introduced for maximizing the coverage region.
To avoid excessive symbol definitions, we will use the same notation used for 3D positions in Section \ref{sec:PF} in this section.
We denote the length of each RIS by $\Delta_{\operatorname{RIS}}$, which is determined by the number of reflection elements and the spacing between two adjacent elements for each RIS, given in \eqref{eq:9}. 
We mainly consider two different types of obstacles; the first is circular obstacles and the second is wall-type obstacles.
For the first model, denote the center position and the radius of obstacle $i$ by $\mathbf{q}_{i}^{[\operatorname{o}]}\in \mathcal{S}$ and $X_i$ respectively for $i\in[1:I]$, where $X_i$ lies in the range $X_l\leq X_i\leq X_u$ and $X_l$ and $X_u$ represent the lower and upper limits respectively.
For the second model, we assume the thickness of walls to be negligible. Therefore, to specify obstacle $i$, denote the center position, length, and angle by $\mathbf{q}_{i}^{[\operatorname{o}]}$, $Y_i$ and $\vartheta_i$ respectively for $i\in[1:I]$. The length $Y_i$ lies in the range $Y_l\leq Y_i\leq Y_u$, where $Y_l$ and $Y_u$ represents the lower and upper limits respectively and the angle is defined counter clockwise with the x-axis in the range $\vartheta_i \in (0,\pi]$.

Let $\mathcal{O}\subseteq \mathcal{S}$ be the region occupied by all the obstacles.
Any arbitrary position in $\mathcal{S}\setminus \mathcal{O}$ is said to be covered if there exists a LoS path from the BS or via one of the RISs. Then the coverage region $\mathcal{C}\left(\{\mathbf{q}_j\}_{j\in[1:J]}\right)$ is defined as the set of all such positions, which is obviously a function of RIS positions $\{\mathbf{q}_j\}_{j\in[1:J]}$.
Denote $C\left(\{\mathbf{q}_j\}_{j\in[1:J]}\right)=\operatorname{Area}(\mathcal{C}\left(\{\mathbf{q}_j\}_{j\in[1:J]}\right))$ by the coverage area attained by $\mathcal{C}\left(\{\mathbf{q}_j\}_{j\in[1:J]}\right)$.
More specifically, assuming the BS as a point source, there exists a LoS path from the BS to the given position if there is no obstacle in the straight line connecting the BS position $\mathbf{q}_0$ and the given position.
Similarly, there exists a LoS path from RIS $j$ to the given position if we can find a specific position in RIS $j$ such that there is no obstacle in the straight line connecting $\mathbf{q}_0$ to the specific position in RIS $j$ and in the straight line connecting the specific position in RIS $j$ to the given position, see Fig. \ref{los_definition} for better understanding.

\begin{figure*}[t]
    \centering
    \subfigure[LoS or NLoS examples.]{
    \includegraphics[width=0.35\columnwidth]{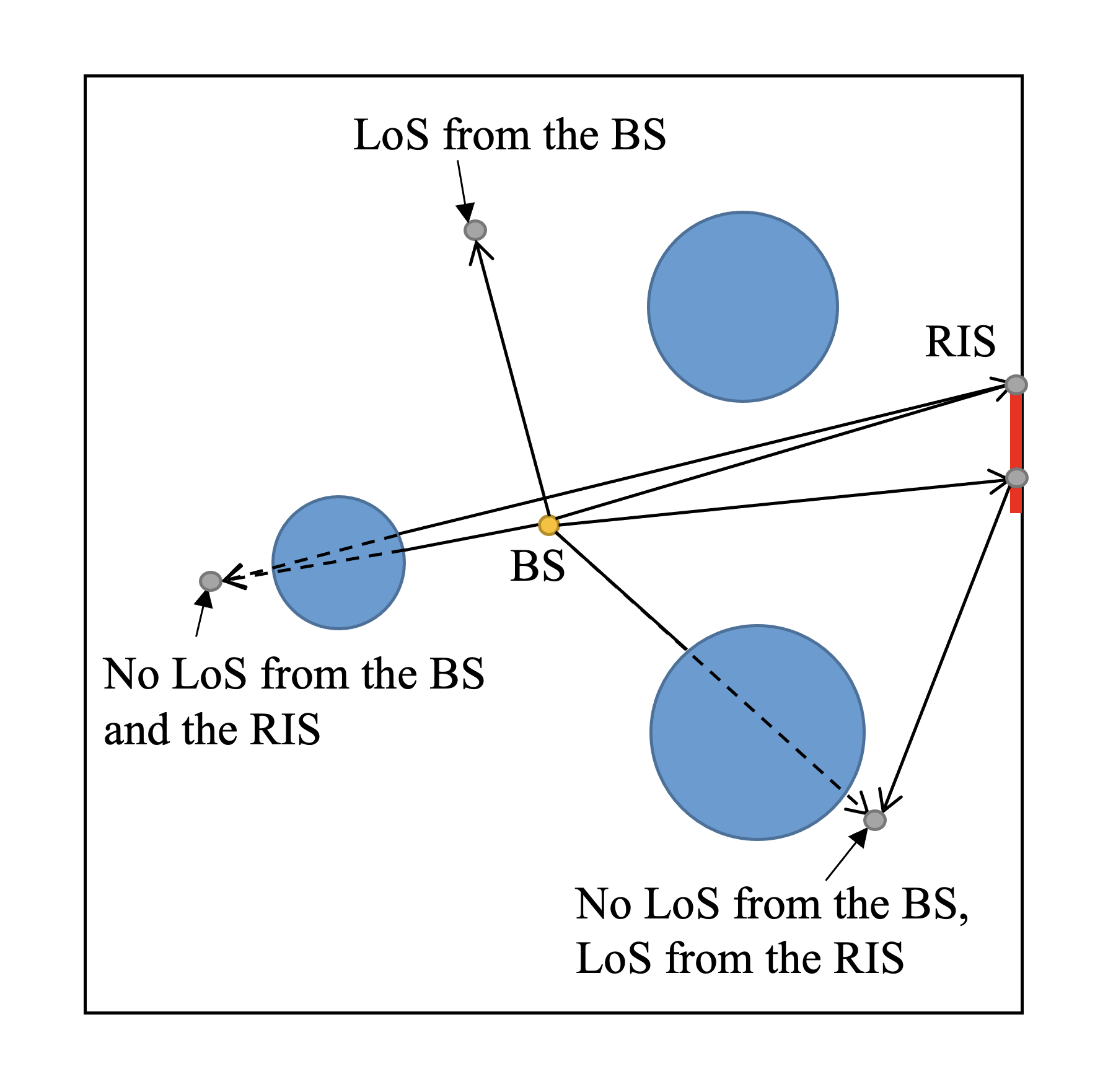}
    \label{los_definition}
    }\quad
    \subfigure[Example of $\mathcal{Q}^*$.]{
    \includegraphics[width=0.35\columnwidth]{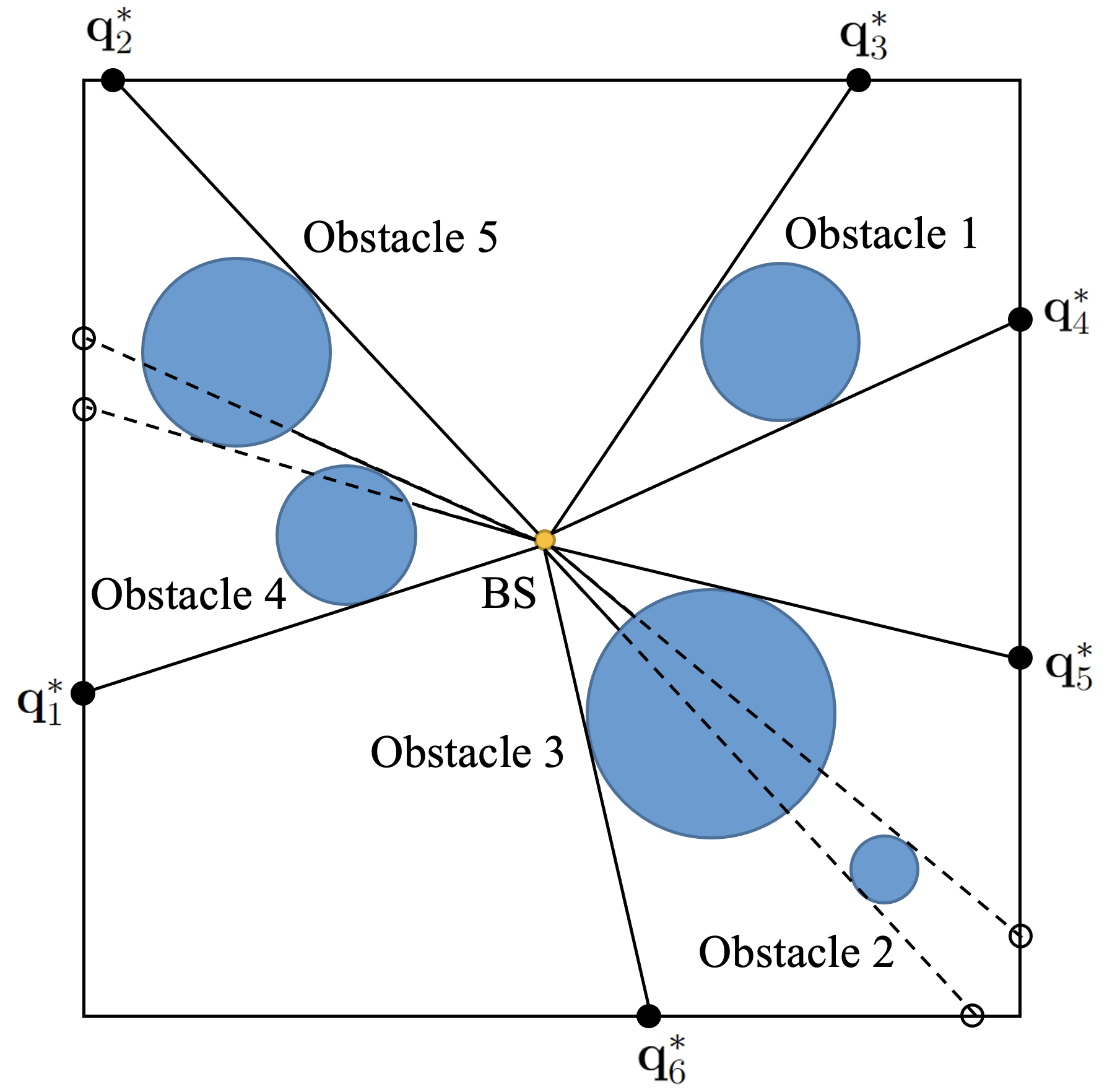}
    \label{q_set}
    }
    \caption{LoS and NLoS for arbitrary positions (a) and the ordered set of $\mathcal{Q}^*$ (b).}
    \label{algorithm_ex}
\end{figure*}

Suppose that the set of possible RIS positions is given as $\mathcal{Q}=\big\{\mathbf{q}^{[\operatorname{RIS}]}_1,\mathbf{q}^{[\operatorname{RIS}]}_2,\cdots,\mathbf{q}^{[\operatorname{RIS}]}_T\big\}$, which is consisted of $T$ candidate positions in four side walls.
Then, the RIS placement optimization under the set $\mathcal{Q}$ is represented as
\begin{align}\label{eq:12_1}
\mathop {\max }\limits_{\mathbf{q}_j \in \mathcal{Q} \text{ for } j\in [1:J],\mathbf{q}_j \neq \mathbf{q}_k \text{ for all } j \neq k }C\left(\{\mathbf{q}_j\}_{j\in[1:J]}\right). 
\end{align}
Denote
\begin{align}\label{eq:14}
\{\mathbf{q}_j^*\}_{j\in[1:J]}=\mathop {\arg\max }\limits_{\mathbf{q}_j \in \mathcal{Q} \text{ for } j\in [1:J],\mathbf{q}_j \neq \mathbf{q}_k \text{ for all } j \neq k }C\left(\{\mathbf{q}_j\}_{j\in[1:J]}\right). 
\end{align}

Note that $\{\mathbf{q}_j^*\}_{j\in[1:J]}$ is optimal when the set $\mathcal{Q}$ is given and, therefore, the candidate set $\mathcal{Q}$ should be also optimized. Obviously, if we equally quantize the positions in four side walls with small enough size and perform \eqref{eq:12_1} under such candidate set, it will provide the optimal RIS placement. 
We propose a deep learning based selection of $\{\mathbf{q}_j^*\}_{j\in[1:J]}$ under the given set $\mathcal{Q}$ in Section \ref{subsec:supervised learning} and then propose an efficient method of constructing the set $\mathcal{Q}$ in Section \ref{subsec:BestRISCandSet}.

\subsection{RIS Placement Optimization via Deep Learning} \label{subsection:RIS_placement}
\label{subsec:supervised learning}
As expected, the coverage or blockage region can be drastically changed depending on the positions and shapes of obstacles. Hence, to efficiently solve the optimization problem in (\ref{eq:12_1}), we apply a deep learning technique which can provide a universally good performance robust to various indoor blockage environments. 
For this purpose, we introduce a feed-forward neural network model consisting of an input layer, output layer, and several hidden layers. We can train this neural network in an offline manner and then utilize the trained model to estimate optimal RIS positions  $\{\mathbf{q}_j^*\}_{j\in[1:J]}$ to provide a reasonably good coverage area in an online manner that significantly reduces computational complexity. 

The input layer consists of the position of the BS $\mathbf{q}_0$ and RIS candidate positions $\big\{\mathbf{q}^{[\operatorname{RIS}]}_t\big\}_{t\in[1:T]}$. To provide information about $I$ obstacles, we also add the center position and radius for circular obstacles and the center position, length, and angle for wall-type obstacles. That is,
$\mathbf{e}_i$ is defined based on the obstacle type as
\begin{align} 
    \mathbf{e}_i= \left\{
        \begin{array}{ll}
            (\mathbf{q}_i^{[o]},X_i) & \quad \text{for the circular obstacle}, \\
            (\mathbf{q}_i^{[o]},Y_i,\vartheta) & \quad \text{for the wall-type obstacle}.
        \end{array}
    \right.
\end{align}
Then the input layer is expressed as 
\begin{align} \label{eq:input_layer}
\left[\mathbf{q}_0, (\mathbf{q}^{[\operatorname{RIS}]}_1,\mathbf{q}^{[\operatorname{RIS}]}_2,\cdots,\mathbf{q}^{[\operatorname{RIS}]}_T), (\mathbf{e}_1,\mathbf{e}_2,\cdots,\mathbf{e}_I)\right].
\end{align}
The output layer consists of $T$ nodes, each of which corresponds to the Q-value of a specific RIS candidate position. To construct the positions of $J$ RISs, we select $J$ largest Q-values among $T$ Q-values and set the positions of $J$ RISs accordingly. 
Denote the resulting RIS positions by $\{\hat{\mathbf{q}}_j\}_{j\in[1:J]}$.

We now explain how to train the neural network. One possible way is to adopt the supervised learning framework, which sets the reward function as one if $(\hat{\mathbf{q}}_{1},\dots,\hat{\mathbf{q}}_{J})=(\mathbf{q}^*_{1},\dots,\mathbf{q}^*_{J})$ and zero otherwise.

Note that the optimal RIS positions $\{\mathbf{q}_j^*\}_{j\in[1:J]}$ is needed in order to calculate this indicator reward. However, the computational complexity attaining $\{\mathbf{q}_j^*\}_{j\in[1:J]}$ via the exhaustive search increases exponentially as the number of candidate positions or the number of RISs increases.
To reduce such computational complexity, we apply the unsupervised learning framework by setting the reward function as the coverage area, i.e., $C\left(\{\hat{\mathbf{q}}_j\}_{j\in[1:J]}\right)$. 

\subsection{Construction of the RIS Candidate Set} \label{subsec:BestRISCandSet}

Previously, we focused on the deep learning based RIS placement optimization when the RIS candidate set $\mathcal{Q}$ is given. In this subsection, we propose an efficient construction method of $\mathcal{Q}$ that can provide a coverage area close to an optimal set while preserving its size reasonably small.

Firstly, a naive approach to construct $\mathcal{Q}$ is to quantize the four side walls with equal size. One advantage of this strategy is that the RIS candidate set $\mathcal{Q}$ is independently given without considering the characteristics of obstacles so that training based on this fixed candidate set $\mathcal{Q}$ might be more stable for various indoor network environments. Furthermore, because $\mathcal{Q}$ is a fixed set,  $\{\mathbf{q}^{[\operatorname{RIS}]}_t\}_{t\in[1:T]}$ in \eqref{eq:input_layer} is not required to provide for constructing the input layer.

Unlike the above approach, we propose a construction method of $\mathcal{Q}$, which depends on obstacle characteristics. Algorithm \ref{alg:construction_Q} states the detailed proposed algorithm. The intuition of the proposed construction is that locating RISs at the edge of LoS and NLoS can enlarge the coverage area for a broad class of indoor environments compared to the cases of locating them in the middle of LoS lines.   
Note that technical issues to be addressed for the proposed algorithm is that the resulting $\mathcal{Q}^*$ can be very different depending on a given indoor environment, which makes training of a neural network hard.
Furthermore, as seen in Algorithm \ref{alg:construction_Q}, the number of RIS candidate positions in $\mathcal{Q}^*$ can be also changed with its maximum number $2I$, i.e., $|\mathcal{Q}^*|\leq 2I$. Recall that $I$ is the number of obstacles in the indoor network.
To handle these technical issues, we firstly construct the input layer and output layer of a neural network assuming $T=2I$.
Then, for training, we rearrange the candidate positions in $\mathcal{Q}^*$ clockwise starting from the $(0,0)$ position and construct the input layer in \eqref{eq:input_layer} based on the reordered set as   
\begin{align}
(\mathbf{q}^{[\operatorname{RIS}]}_1,\mathbf{q}^{[\operatorname{RIS}]}_2,\cdots,\mathbf{q}^{[\operatorname{RIS}]}_T)= (\mathbf{q}^{*}_1,\mathbf{q}^{*}_2,\cdots,\mathbf{q}^{*}_{|\mathcal{Q}^*|}, \mathbf{0},\cdots, \mathbf{0}),
\end{align}
where $\mathbf{q}^{*}_i$ denotes the $i$th position in the ordered set of $\mathcal{Q}^*$.
For better understanding, we refer to the example of the ordered set of $\mathcal{Q}^*$ in Fig. \ref{q_set}.

Because $|\mathcal{Q}^*|\leq 2I$, we add zero vectors for the rest of the input layer.
For the output layer, we choose the largest $J$ Q-values from the first $|\mathcal{Q}^*|$ Q-values among $T$ Q-values and the corresponding $J$ positions are used to $J$ RIS positions. 

\begin{algorithm}[pt]\caption{Proposed construction of $\mathcal{Q}$.} \label{alg:construction_Q}
	\begin{algorithmic}[1]\label{alg0}
		\State {\bf Initialization}: Set $\mathcal{Q}^*=\emptyset$.
		\For {$i\in[1:I]$} 
		\State Draw two straight lines from $\mathbf{q}_0$ tangent to obstacle $i$ from both sides.
 	    \State Denote two positions intersected with the network boundary as $\mathbf{q}'_1$ and $\mathbf{q}'_{2}$.    
            \For {$j\in[1:2]$} 
            \If {No obstacle between the straight line from $\mathbf{q}_0$ to $\mathbf{q}'_j$}
		      \State Update $\mathcal{Q}^*\leftarrow\mathcal{Q}^*\cup\{\mathbf{q}'_j\}$.
		\EndIf		
            \EndFor
  		\EndFor
	\State {\bf Output}: $\mathcal{Q}^*$.
	\end{algorithmic}
\end{algorithm} 

\section{RIS-Aided Hybrid Beamforming Scheme} \label{sec:scheme_beamtraining}

To solve the sum rate maximization in \eqref{eq:opt_prob}, we firstly proposed the efficient method of RIS placement optimization that maximize the coverage area in Section \ref{sec:scheme_ris_optimization}. 
In this section, we propose a systematic construction of hybrid beamforming at the BS and reflection coefficients at multiple RISs to maximize the sum rate in an online manner based on the positions and channels of users.
We apply a codebook-based analog beamforming construction at the BS and a codebook-based construction of reflection coefficients at multiple RISs, each of which are stated in Section \ref{subsec:analog_beamforming} and \ref{subsec:RIS_beam}, respectively. Then the beam scanning and RIS allocation scheme combined with zero-forcing digital beamforming based on beamformed CSI is proposed in Section \ref{subsec:scheme}. 

\subsection{Analog Beamforming at the BS} \label{subsec:analog_beamforming}
Recall that the BS is equipped with a 2D planar hybrid antenna array system having $L$ sub-arrays.  
As each sub-array is capable of steering its analog beam in a certain 3D direction, the analog beamforming of each sub-array in \eqref{eq:1_0} can be constructed as a form of
\begin{align}\label{eq:12}
\mathbf{f}(\theta,\phi)&= \mathbf{a}_{{\operatorname{ULA}}}^{(\alpha_0)}(\theta,\phi) \otimes \mathbf{a}_{{\operatorname{ULA}}}^{(\beta_0)}(\theta,\phi)\nonumber\\
&=\frac{1}{\sqrt{M}}\left[1,e^{-\jmath\frac{2\pi}{\lambda}\delta_z\cos{(\theta)}},e^{-\jmath\frac{2\pi}{\lambda}\delta_y\sin{(\theta)}\sin{(\phi)}},\right.\nonumber\\
&{~~~~~~~~~~~~~~~}\dots ,
\left.e^{-\jmath\frac{2\pi}{\lambda}\left((M_1-1)\delta_y\sin{(\theta)}\sin{(\phi)+(M_2-1)\delta_z\cos{(\theta)}}\right)}\right]^T
\end{align}
when the antenna array of the BS is placed in the yz-plane, i.e., $\alpha_0=x$ and $\beta_0=z$.
See \eqref{eq:8} and \eqref{eq:9} for other cases.

The BS performs codebook-based analog beamforming with the help of RIS reflection procedure, which will be described in detail in Section \ref{subsec:scheme}. In this subsection, we firstly define the analog beamforming codebook used at each of the sub-arrays at the BS.
Let $\{\theta_1,\theta_2,\dots,\theta_{V_1}\}$ and $\{\phi_1,\phi_2,\dots,\phi_{V_2}\}$ be the sets of elevation and azimuth angles used for the codebook construction of the analog beamforming. 
Based on these elevation and azimuth angles, the analog beamforming codebook is constructed as
\begin{align} \label{eq:BS_coodebook}
\mathcal{F}^{[\operatorname{BS}]}=\left\{\mathbf{f}(\theta,\phi) \text{ for all } (\theta,\phi)\in \{\theta_1,\theta_2,\dots,\theta_{V_1}\} \times \{\phi_1,\phi_2,\dots,\phi_{V_2}\}\right\},
\end{align}
where $\mathbf{f}(\theta_{l},\phi_{l})$ is given by \eqref{eq:12}.
In this paper, the same codebook will be used for all sub-arrays of the BS.
That is, the analog beamforming of sub-array $l$ in \eqref{eq:1_0} is given by $\mathbf{f}_{A,l}\in\mathcal{F}^{[\operatorname{BS}]}$ for all $l\in[1:L]$. For notational convenience, denote the $i$th analog beamforming vector in $\mathcal{F}^{[\operatorname{BS}]}$ as $\mathbf{f}^{[\operatorname{BS}]}(i)$, where $i\in[1:V_1V_2]$.

\subsection{RIS Reflection Procedure} \label{subsec:RIS_beam}

For the RIS reflection procedure, each RIS will reflect its received signal from the BS towards to a specific user. Unlike the analog beamforming at the BS, the RIS reflection procedure is also required to consider the incoming signal and LoS channel angles from the BS to each RIS.
Assuming that the analog beamforming at the BS is aligned to the LoS channel angles, the phase reflection coefficient for reflection element $n$ of RIS $j$ in \eqref{eq:3} is required to set as
\begin{align} \label{eq:22}
\sigma_{j,n}\left(\theta_j^{(r,0)},\phi_j^{(r,0)},\theta,\phi\right)=-\frac{2\pi}{\lambda}\left[\delta_\alpha(n_1-1)\left[\Psi^{\alpha_j}+\Psi^{(r,\alpha_{j}})\right]+\delta_\beta(n_2-1)\left[\Psi^{\beta_j}+\Psi^{(r,\beta_{j})}\right]\right]
\end{align}
for all $n\in[1:N]$ in order to reflect its received signal towards to the specific direction of elevation and azimuth angles $(\theta,\phi)$.
Recall that $\theta_j^{(r,0)}$ and$\phi_j^{(r,0)}$ are the elevation and azimuth angles of the LoS component from the BS to RIS $j$ in \eqref{eq:7} and the definitions of $n_1$ and $n_2$ are given in Section \ref{subsec:Network_Model}. Here,
\begin{align} \label{eq:23}
\Psi^{\alpha_j} =\begin{cases}\sin{\theta}\cos{\phi} &\mbox{ if }\alpha_j=x,\\
\sin{\theta}\sin{\phi} &\mbox{ if }\alpha_j=y,\\
\cos{\theta} &\mbox{ if }\alpha_j=z
\end{cases}
\end{align}
and
\begin{align} \label{eq:24}
\Psi^{(r,\alpha_{j})} =\begin{cases}\sin{\theta}_j^{(r,0)}\cos{\phi}_j^{(r,0)} &\mbox{ if }\alpha_j=x,\\
\sin{\theta}_j^{(r,0)}\sin{\phi}_j^{(r,0)} &\mbox{ if }\alpha_j=y,\\
\cos{\theta}_j^{(r,0)} &\mbox{ if }\alpha_j=z.
\end{cases}
\end{align}
As the same manner, $\Psi^{\beta_j}$ and $\Psi^{(r,\beta_{j})}$ can be defined.
Then, denote $\mathbf{\Sigma}_j(\theta_j^{(r,0)},\phi_j^{(r,0)},\theta,\phi)$ as the reflection matrix of RIS $j$ consisting of the phase reflection coefficients in \eqref{eq:22}.

Similar to the analog beamforming at the BS, each RIS performs codebook-based RIS reflection. 
We define the angle pair $\theta \in \{\theta_1^{[\operatorname{RIS}]},\theta_2^{[\operatorname{RIS}]},\dots,\theta_{W_1}^{[\operatorname{RIS}]}\}$ and $\phi \in \{\phi_1^{[\operatorname{RIS}]},\phi_2^{[\operatorname{RIS}]},\dots,\phi_{W_2}^{[\operatorname{RIS}]}\}$. Then the reflection codebook of RIS $j$ is constructed as
\begin{align} \label{eq:ris_codebook}
\mathcal{F}_j^{[\operatorname{RIS}]}= \left\{\mathbf{\Sigma}_j(\theta_j^{(r,0)},\phi_j^{(r,0)},\theta,\phi) \text{ for all } (\theta,\phi) \in \{\theta_1^{[\operatorname{RIS}]},\dots,\theta_{W_1}^{[\operatorname{RIS}]}\}\times\{\phi_1^{[\operatorname{RIS}]},\dots,\phi_{W_2}^{[\operatorname{RIS}]}
\}\right\}.
\end{align}
As seen in \eqref{eq:ris_codebook}, because the LoS angles from the BS are different for each RIS, the resulting reflection codebook is different for each RIS even if each RIS utilizes the same set of elevation and azimuth angles for reflection.
For notational convenience, denote the $i$th reflection matrix in $\mathcal{F}_j^{[\operatorname{RIS}]}$ as $\mathbf{F}^{[\operatorname{RIS}]}_j(i)$, where $i\in[1:W_1W_2]$.

\begin{remark}
In this paper, we assume that the BS--user channels $\{\mathbf{h}_{k,0}\}_k$ and the RIS--user channels $\{\mathbf{h}_{k,j}\}_{k,j}$ are not available at the BS. Hence, the codebook-based beamforming is required. On the other hand, the LoS angles from the BS to RISs $\{\theta_j^{(r,0)},\phi_j^{(r,0)}\}_{j}$ is assumed to be available and therefore can be used for constructing $\{\mathcal{F}_j^{[\operatorname{RIS}]}\}_j$ in \eqref{eq:ris_codebook}. \hfill$\lozenge$
\end{remark}

\subsection{Hybrid Beamforming and RIS Procedure via Beam Scanning} \label{subsec:scheme}
In this subsection, we state the BS beam scanning based on $\mathcal{F}^{[\operatorname{BS}]}$ in \eqref{eq:BS_coodebook} and the RIS beam scanning based on $\mathcal{F}_j^{[\operatorname{RIS}]}$ in \eqref{eq:ris_codebook} for RIS $j$, which will be sequentially conducted.
Then, each user $k$ will report the measured SNR to the BS. Denote $\operatorname{SNR}_k(i)$ as the received SNR of user $k$ when the BS sends $\mathbf{f}^{[\operatorname{BS}]}(i)$, where $i\in[1:V_1V_2]$.
Similarly, denote $\operatorname{SNR}_k^{[\operatorname{RIS}]}(j,i)$ as the received SNR of user $k$ when RIS $j$ sets $\mathbf{\Sigma}_j=\mathbf{F}_j^{[\operatorname{RIS}]}(i)$, where $i\in[1:W_1W_2]$. The detailed beam scanning and SNR reporting are given in Algorithm \ref{alg:beamscanning}. Lines 1 to 4 are the BS beam scanning and the corresponding SNR reporting.
Lines 6 to 10 are RIS $j$ beam scanning while steering the BS beam to RIS $j$. In Line 7, the definitions of $(\theta_j^{(t,0)},\phi_j^{(t,0)})$ and $\mathbf{f}(\theta,\phi)$ are given in \eqref{eq:7} and \eqref{eq:12}, respectively.

Note that the first sub-array of the BS is only activated for the BS and RIS beam scanning in Algorithm \ref{alg:beamscanning}, see Lines 2 and 7.
The following lemma shows that the received SNR of user $k$ is the same by activating one of $L$ sub-arrays of the BS if the BS--user channel only contains the LoS component. Hence, for the BS beam scanning procedure, the first sub-array is only used.

\begin{algorithm}[pt]\caption{Beam scanning and SNR reporting.} \label{alg:beamscanning}
	\begin{algorithmic}[1]\label{alg2}
\For {$i\in{\left[1:V_1V_2\right]}$} 
		\State The BS transmits $\mathbf{x}=\left[\mathbf{f}^{[\operatorname{BS}]}(i),\mathbf{0}_M^T,\dots,\mathbf{0}_M^T\right]^T$.
  \State Each user $k$ reports $\operatorname{SNR}_k(i)$.
\EndFor
\For {$j\in{[1:J]}$}
\For {$i\in{[1:W_1W_2]}$}
  		\State The BS transmits $\mathbf{x}=\left[\mathbf{f}(\theta_j^{(t,0)},\phi_j^{(t,0)}),\mathbf{0}_M^T,\dots,\mathbf{0}_M^T\right]^T$.
   \State RIS $j$ sets $\mathbf{\Sigma}_j=\mathbf{F}_j^{[\operatorname{RIS}]}(i)$.
  \State Each user $k$ reports $\operatorname{SNR}_k^{[\operatorname{RIS}]}(j,i)$.
\EndFor 
\EndFor
\State {\bf{ Output}}: $\{\operatorname{SNR}_k(i)\}_{k\in{[1:K]}, i\in{[1:V_1V_2]}}$, $\{\operatorname{SNR}_k^{[\operatorname{RIS}]}(j,i)\}_{k\in{[1:K]},j\in{[1:J]}, i\in{[1:W_1W_2]}}$.
	\end{algorithmic}
\end{algorithm}

\begin{lemma}\label{th:2} 
Denote $\mathbf{F}_{A,l}\left(\theta,\phi\right)$ as $\mathbf{F}_{A}$ when $\mathbf{f}_{A,l'}=\mathbf{f}(\theta,\phi)$ for $l'=l$ and $\mathbf{f}_{A,l'}=\mathbf{0}$ otherwise in \eqref{eq:1_0}.
Then, for any arbitrary $\theta$ and $\phi$, $\|c_{k,0}^{(0)}\mathbf{a}_{\operatorname{BS}}^{(\alpha_0,\beta_0)}\left(\theta_{k,0}^{(t,0)},\phi_{k,0}^{(t,0)}\right)\mathbf{F}_{A,l}\left(\theta,\phi\right)\|^2$ is the same for all $l\in[1:L]$, where $c_{k,0}^{(0)}\mathbf{a}_{\operatorname{BS}}^{(\alpha_0,\beta_0)}\left(\theta_{k,0}^{(t,0)},\phi_{k,0}^{(t,0)}\right)$ is the LoS channel component in \eqref{eq:5}.
\end{lemma}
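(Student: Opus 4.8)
The plan is to show that activating a single sub-array $l$ only multiplies the LoS channel contribution by a unit-modulus scalar whose magnitude is independent of $l$. The key structural fact I would exploit is that the BS array is a \emph{uniform} planar array with identical sub-arrays, so the full array response $\mathbf{a}_{\operatorname{BS}}^{(\alpha_0,\beta_0)}(\theta,\phi)$ factors as a Kronecker product of a ``coarse'' inter-sub-array steering vector and a ``fine'' intra-sub-array steering vector. Concretely, writing the $LM$ antenna coordinates as (sub-array index $l$) $\times$ (element index $m$ within the sub-array), for any direction $(\theta,\phi)$ one has
\begin{align}\label{eq:lemproof_factor}
\mathbf{a}_{\operatorname{BS}}^{(\alpha_0,\beta_0)}(\theta,\phi) = \mathbf{b}(\theta,\phi)\otimes \mathbf{a}_{\operatorname{sub}}(\theta,\phi),
\end{align}
where $\mathbf{a}_{\operatorname{sub}}(\theta,\phi)\in\mathbb{C}^{M\times1}$ is (up to the normalization $1/\sqrt{M}$) exactly the per-sub-array response appearing in $\mathbf{f}(\theta,\phi)$ of \eqref{eq:12}, and $\mathbf{b}(\theta,\phi)\in\mathbb{C}^{L\times1}$ is a pure-phase vector with $|b_l(\theta,\phi)|=1$ for every $l$ arising from the displacement of sub-array $l$ relative to sub-array $1$. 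This factorization is the natural consequence of the nested 2D indexing set up in Section \ref{subsec:Network_Model} and of the ULA definition \eqref{eq:9}.

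Next I would compute the scalar $c_{k,0}^{(0)}\mathbf{a}_{\operatorname{BS}}^{(\alpha_0,\beta_0)}(\theta_{k,0}^{(t,0)},\phi_{k,0}^{(t,0)})\mathbf{F}_{A,l}(\theta,\phi)$. Since $\mathbf{F}_{A,l}(\theta,\phi)$ is the all-zero matrix except for its $l$th block-column, which equals $\mathbf{f}(\theta,\phi)$, the matrix-vector product picks out only the $M$ coordinates belonging to sub-array $l$. Using \eqref{eq:lemproof_factor} at the LoS direction and evaluating on those coordinates, the product collapses to
\begin{align}\label{eq:lemproof_scalar}
c_{k,0}^{(0)}\, b_l\!\left(\theta_{k,0}^{(t,0)},\phi_{k,0}^{(t,0)}\right)\cdot \mathbf{a}_{\operatorname{sub}}^\dagger\!\left(\theta_{k,0}^{(t,0)},\phi_{k,0}^{(t,0)}\right)\mathbf{f}(\theta,\phi),
\end{align}
i.e. a product of three factors: a channel coefficient, the $l$th entry of $\mathbf{b}$, and an inner product between the per-sub-array LoS response and the chosen beamforming vector $\mathbf{f}(\theta,\phi)$ — the latter two of which do \emph{not} depend on $l$ except through $b_l$. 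Taking squared magnitudes, $|b_l(\cdot,\cdot)|^2=1$ for all $l$ by construction, so $\|c_{k,0}^{(0)}\mathbf{a}_{\operatorname{BS}}^{(\alpha_0,\beta_0)}(\theta_{k,0}^{(t,0)},\phi_{k,0}^{(t,0)})\mathbf{F}_{A,l}(\theta,\phi)\|^2$ is the same for every $l\in[1:L]$, which is exactly the claim.

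I expect the main obstacle to be purely bookkeeping: making the factorization \eqref{eq:lemproof_factor} precise under the paper's specific 1D$\leftrightarrow$2D index conventions and the possibly different plane orientations $(\alpha_0,\beta_0)$, and verifying that the inter-sub-array offsets indeed enter only as unit-modulus phases (which requires that the sub-arrays be regularly spaced copies of one another — implicit in the ``uniform planar array of $L=L_1L_2$ sub-arrays'' assumption). Once the Kronecker decomposition is written down correctly, everything else is immediate, so I would keep the exposition light on the indexing algebra and emphasize only that (i) the response factors into a coarse phase-only part and a fine part, and (ii) selecting sub-array $l$ isolates the coarse part's $l$th entry, whose modulus is one.
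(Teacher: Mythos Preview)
Your proposal is correct and is essentially the same argument as the paper's: both show that the $l$-dependence enters only through a unit-modulus phase factor that vanishes upon taking the squared norm. The paper carries this out by explicitly expanding the exponentials for the yz-plane case and factoring out $e^{\jmath\frac{2\pi}{\lambda}\left((l_1-1)\delta_y\sin\theta\sin\phi+(l_2-1)\delta_z\cos\theta\right)}$, whereas you package the same fact via the Kronecker decomposition $\mathbf{b}\otimes\mathbf{a}_{\operatorname{sub}}$ with $|b_l|=1$; the only caution is the index-ordering bookkeeping you already flagged, since the paper's ordering in \eqref{eq:8} is by spatial axis rather than by sub-array, so the clean $\mathbf{b}\otimes\mathbf{a}_{\operatorname{sub}}$ form holds only up to a fixed permutation (which does not affect the argument).
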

\begin{proof}
For simplicity, we assume that the BS is placed in the yz-plane, i.e., $\alpha_0=y$ and $\beta_0=z$. We have
\begin{align}\label{eq:29}
&\|c_{k,0}^{(0)}\mathbf{a}_{\operatorname{BS}}^{(\alpha_0,\beta_0)}\left(\theta_{k,0}^{(t,0)},\phi_{k,0}^{(t,0)}\right)\mathbf{F}_{A,l}\left(\theta,\phi\right)\|^2\nonumber\\
&=\bigg\|\frac{c_{k,0}^{(0)}}{\sqrt{M}}\Big[0,\dots,\big(e^{\jmath\frac{2\pi}{\lambda}\left((l_1-1)\delta_y\sin{\theta_{k,0}^{(t,0)}}\sin{\phi_{k,0}^{(t,0)}+(l_2-1)\delta_z\cos{\theta_{k,0}^{(t,0)}}}\right)}+\nonumber\\
&{~~~~~}\dots+e^{\jmath\frac{2\pi}{\lambda}\left((l_1-1)(M_1-1)\delta_y\sin{\theta_{k,0}^{(t,0)}}\sin{\phi_{k,0}^{(t,0)}+(l_2-1)(M_2-1)\delta_z\cos{\theta_{k,0}^{(t,0)}}}\right)}\times\nonumber \\
&{~~~~~}e^{-\jmath\frac{2\pi}{\lambda}\left((M_1-1)\delta_y\sin{\theta}\sin{\phi-(M_2-1)\delta_z\cos{\theta}}\right)}\big) ,\dots,0\Big]\bigg\|^2, \nonumber \\
&=\bigg\|\frac{c_{k,0}^{(0)}}{\sqrt{M}}e^{\jmath\frac{2\pi}{\lambda}\left((l_1-1)\delta_y\sin\theta_{k,0}^{(t,0)}\sin\phi_{k,0}^{(t,0)}+(l_2-1)\delta_z \cos\theta_{k,0}^{(t,0)}\right)}\Big[0,\dots,\big(1+\nonumber\\
&{~~~~~}\dots+e^{\jmath\frac{2\pi}{\lambda}\left((M_1-1)\delta_y\sin{\theta_{k,0}^{(t,0)}}\sin{\phi_{k,0}^{(t,0)}+(M_2-1)\delta_z\cos{\theta_{k,0}^{(t,0)}}}\right)}\times\nonumber \\
&{~~~~~}e^{-\jmath\frac{2\pi}{\lambda}\left((M_1-1)\delta_y\sin{\theta}\sin{\phi-(M_2-1)\delta_z\cos{\theta}}\right)}\big) ,\dots,0\Big]\bigg\|^2, \nonumber \\
&=\|c_{k,0}^{(0)}\mathbf{a}_{\operatorname{BS}}^{(\alpha_0,\beta_0)}\left(\theta_{k,0}^{(t,0)},\phi_{k,0}^{(t,0)}\right)\mathbf{F}_{A,1}\left(\theta,\phi\right)\|^2,
\end{align}
which completes the proof.
\end{proof}

Based on the reported SNR values from all users, each user $k$ will be served directly from the BS or be served from one of the $J$ RISs.
To state such sub-array and RIS assignment, we define the assignment vector $\mathbf{u}=\left[u_1,u_2,\dots,u_K\right]$ in which $u_k\in[0:K]$ denotes the assignment of user $k$ such that $u_k=0$ means that user $k$ will be served directly from the BS and $u_k=j$ means that user $k$ will be served from RIS $j$.
Also, the following conditions should be satisfied.
\begin{align} \label{eq:assign_condition}
    \sum_{k\in[1:K]} \mathbf{1}(u_k=j)\leq 1 {~}\text{for all}{~} j\in[1:J], \nonumber \\
    \sum_{j\in[1:J]} \mathbf{1}(u_k=j)\leq 1 {~}\text{for all}{~} k\in[1:J].
\end{align}

Algorithm \ref{alg:user_selection} states the proposed algorithm to set the assignment vector $\mathbf{u}$ when SNR values are reported.
Note that the resulting assignment vector $\mathbf{u}^*$ from Algorithm \ref{alg:user_selection} satisfies the conditions in \eqref{eq:assign_condition}.
Recall that we focus on the regime that $L\geq K$. Hence, if $u^*_k=0$, one of the $L$ sub-arrays at the BS can be assigned to user $k$ and its analog beamforming can be set as the maximum beam index from the reported $\{\operatorname{SNR}_k(i)\}_{i\in{[1:V_1V_2]}}$.  
If $u^*_k\neq 0$, one of the $L$ sub-arrays at the BS can be assigned to user $k$ and its analog beamforming can be set to the direction of RIS $u^*_k$, see Line 7 in Algorithm \ref{alg:beamscanning}. Then RIS $u^*_k$ sets its reflection matrix as the maximum reflection index from the reported $\{\operatorname{SNR}_k^{[\operatorname{RIS}]}(u^*_k,i)\}_{i\in{[1:W_1W_2]}}$.
For non-assigned sub-arrays and RISs, we set the corresponding sub-arrays and RISs as all-zero vectors or matrices.
From the procedure above, the analog beamforming matrix $\mathbf{F}_{A}$ in \eqref{eq:1_0} and the reflection matrix $\mathbf{\Sigma}_j$ in \eqref{eq:3} for all $j\in[1:J]$ can be constructed.
Finally, to state the digital beamforming matrix, denote $\mathbf{W}=\left[\mathbf{w}^T_1,\cdots,\mathbf{w}^T_K\right]^T$, which is the concatenated matrix consisting of the effective channel vectors of $K$ users in \eqref{eq:effective_ch}. 
We apply a generalized zero-forcing beamforming by setting
\begin{align} \label{eq:digtal_beamforming}
\mathbf{f}_{D,k}=\frac{[\mathbf{W}^{\dagger}(\mathbf{W}\mathbf{W}^{\dagger})^{-1}]_k}{\|[\mathbf{W}^{\dagger}(\mathbf{W}\mathbf{W}^{\dagger})^{-1}]_k\|}
\end{align}
for all $k\in[1:K]$, where $[\mathbf{A}
]_k$ denotes the $i$th column vector of the matrix $\mathbf{A}$.
In the end, the achievable sum rate of the proposed scheme can be calculated from \eqref{eq:16}.  

\begin{remark}
To construct the digital beamforming vector in \eqref{eq:digtal_beamforming}, each user $k$ is only required to feedback its effective channel vector $\mathbf{w}_k$ of size $L$ to the BS instead of the original channel vector $\mathbf{h}_{k,0}$ of size $LM$. Furthermore, the end-to-end channel estimation from the BS to each user is enough instead of measuring BS--user, BS--RIS, and RIS--user channels separately.  \hfill$\lozenge$
\end{remark}

\begin{algorithm}[pt]\caption{Sub-array and RIS assignment.} \label{alg:user_selection}
	\begin{algorithmic}[1]\label{alg3}
		\State {\bf Input}:  $\{\operatorname{SNR}_k(i)\}_{k\in{[1:K]}, i\in{[1:V_1V_2]}}$, $\{\operatorname{SNR}_k^{[\operatorname{RIS}]}(j,i)\}_{k\in{[1:K]},j\in{[1:J]}, i\in{[1:W_1W_2]}}$.
		\State {\bf Initialization}: Set $\mathcal{K}=[1:K]$ and $\mathcal{J}=[1:J]$.
		\For {$k\in[1:K]$} 
		\State Calculate $\mathop{\max}\limits_{k'\in \mathcal{K}, i\in[1:V_1V_2]}\operatorname{SNR}_{k'}(i)\triangleq\operatorname{SNR}_{\text{max}} $.
  		\State Calculate {$\mathop{\max}\limits_{k'\in \mathcal{K}, j\in \mathcal{J},  i\in[1:W_1W_2]}\operatorname{SNR}^{[\operatorname{RIS}]}_{k'}(j,i)\triangleq\operatorname{SNR}_{\text{max}}^{[\operatorname{RIS}]} $}.
            \If {$\operatorname{SNR}_{\text{max}} \geq \operatorname{SNR}_{\text{max}}^{[\operatorname{RIS}]}$}
		      \State  Set $u^*_{\bar{k}}=0$, where $ \bar{k}=\mathop{\arg \max}\limits_{k'\in \mathcal{K}, i\in[1:V_1V_2]}\operatorname{SNR}_{k'}(i)$.
        \State Update $\mathcal{K}\leftarrow\mathcal{K}\setminus\{\bar{k}\}$.
        \Else
        	      \State  Set $u^*_{\bar{k}}=\bar{j}$, where $ (\bar{k},\bar{j})=\mathop{\arg\max}\limits_{k'\in \mathcal{K},j\in \mathcal{J}, i\in[1:W_1W_2]}\operatorname{SNR}^{[\operatorname{RIS}]}_{k'}(j,i) $.
        \State Update $\mathcal{K}\leftarrow\mathcal{K}\setminus\{\bar{k}\}$ and $\mathcal{J}\leftarrow\mathcal{J}\setminus\{\bar{j}\}$.
        
		\EndIf		  
            \EndFor
\State {\bf Output}: $\mathbf{u}^*=\left[u^*_1,u^*_2,\dots,u^*_K\right]$.
	\end{algorithmic}
\end{algorithm} 

\section{Numerical Evaluation and Discussion} \label{sec:performance_evaluation}

In this section, we evaluate the coverage area and sum rate achievable by the proposed scheme and compare them with several benchmark schemes. 

\subsection{Evaluation of Coverage Area} \label{subsec:coverage}

As mentioned in Section \ref{sec:scheme_ris_optimization}, the original 3D indoor network is converted into the corresponding 2D indoor network assuming LoS only channels for the RIS placement optimization. 
Hence from the 3D indoor network environment in Table \ref{tab:environment_3d}, the 2D network region is given by $\mathcal{S}=[0,10{~}\text{m}]\times[0,10{~}\text{m}]$ and the BS position is given by $\mathbf{q}_0=[5, 5]$.
For circular obstacles, we assume that $\mathbf{q}^{[\operatorname{o}]}_i$ is drawn uniformly at random from $\mathcal{S}$ and $X_i$ is drawn from $\mathcal{U}(0.5,1.5)$. For wall-type obstacles, we assume that $\mathbf{q}^{[\operatorname{o}]}_i$ is drawn uniformly at random from $\mathcal{S}$ and $Y_i$ and $\vartheta_i$ are drawn from $\mathcal{U}(1,7.3)$ and $\mathcal{U}(0,\pi)$ respectively.
Also, Table \ref{tab:hyperparameters} summarizes the main hyperparameters for the proposed supervised and unsupervised learning.  

\begin{table}\caption{Hyperparameters for deep learning.}\label{tab:hyperparameters}
	\begin{center}
\scalebox{0.85}{
			\begin{tabular}{m{5cm}<{\centering} |m{5cm}<{\centering}}
				\hline
				\multicolumn{1}{c|}{\bf {Parameters}} & \multicolumn{1}{c}{\bf {Values}}\\
				\hline
				\hline
				{Learning rate} & {$0.001$}\\
				\hline
				{Activation function} & {Relu}\\
				\hline
				{Optimizer} & {Adam}\\
				\hline
				{Reward for supervised learning} & {Indicator function {(see Section \ref{subsection:RIS_placement})}} \\
				\hline
	                          {Reward for unsupervised learning} & {Coverage area {(see Section \ref{subsection:RIS_placement})}} \\
				\hline		    
                 {Discount factor} & {$0.95$} \\
			    \hline
				{$\epsilon$ for $\epsilon$-greedy} & {$0.01$} \\
				\hline
				{Minibatch size} & {$32$} \\
				\hline
				{Number of hidden layers} & {$3$} \\
				\hline
				{Number of nodes in hidden layers} & {$64,128,64$} \\
				\hline
		\end{tabular}}
	\end{center}
\end{table}

\subsubsection{Benchmark schemes and computational complexity}
For comparison, we consider several methods for the construction of the RIS positions $\{\mathbf{q}_j\}_{j\in[1:J]}$.  
In this subsection, `Optimal' means the full search algorithm over the set $\mathcal{Q}$ constructing with equal-size quantization of the entire network boundary and letting the quantization size small enough, which is a performance upper bound. 
Also, `Full search in $\mathcal{Q}^*$', `SL in $\mathcal{Q}^*$', and `UL in $\mathcal{Q}^*$' mean the full search, the proposed supervised learning, and the proposed unsupervised learning over the set $\mathcal{Q}^*$, respectively. Lastly, `Random' means that RISs are deployed randomly over the network boundary, which can be regarded as a performance lower bound.  
If we quantize the network boundary with size of $\delta$, then the number of RIS candidate positions is given as $\frac{2S_xS_y}{\delta}$, which linearly increases with decreasing $\delta$. Whereas, the number of RIS candidate positions in $\mathcal{Q}^*$ is upper bounded by $2I$, where $I$ is the number of obstacles.
Even though the RIS placement optimization in $\mathcal{Q}^*$ can significantly reduce the searching space, the number of possible combinations increases exponentially with the number RISs $J$ increases. Hence, for the cases of `Full search in $\mathcal{Q}^*$' and `SL in $\mathcal{Q}^*$', computational complexity increases exponentially with $J$, which might be not applicable for large $J$. Note that for `SL in $\mathcal{Q}^*$', such computational complexity is only required for the training phase.
Lastly, for `UL in $\mathcal{Q}^*$', computational complexity does not increase exponentially with $J$ so that is can be applied for large $J$.

\subsubsection{Comparison of coverage areas}
Fig. \ref{coverage_ex} illustrates exemplary coverage regions with and without RISs, where the positions of RISs are optimized based on `UL in $\mathcal{Q}^*$'. 
In the figures, the blue regions correspond to the regions covered from the BS and the red regions correspond to the regions covered by RISs.
It is seen that few RISs with optimized placement can efficiently resolve the signal blockage for indoor environment. 
It is further shown in Section \ref{subsec:sum_rate} that such improvement on coverage regions provides improved sum rates when users are distributed over the network area.

\begin{figure*}[t]
    \centering
    \subfigure[No RIS (circular).]{
    \includegraphics[width=0.2\columnwidth]{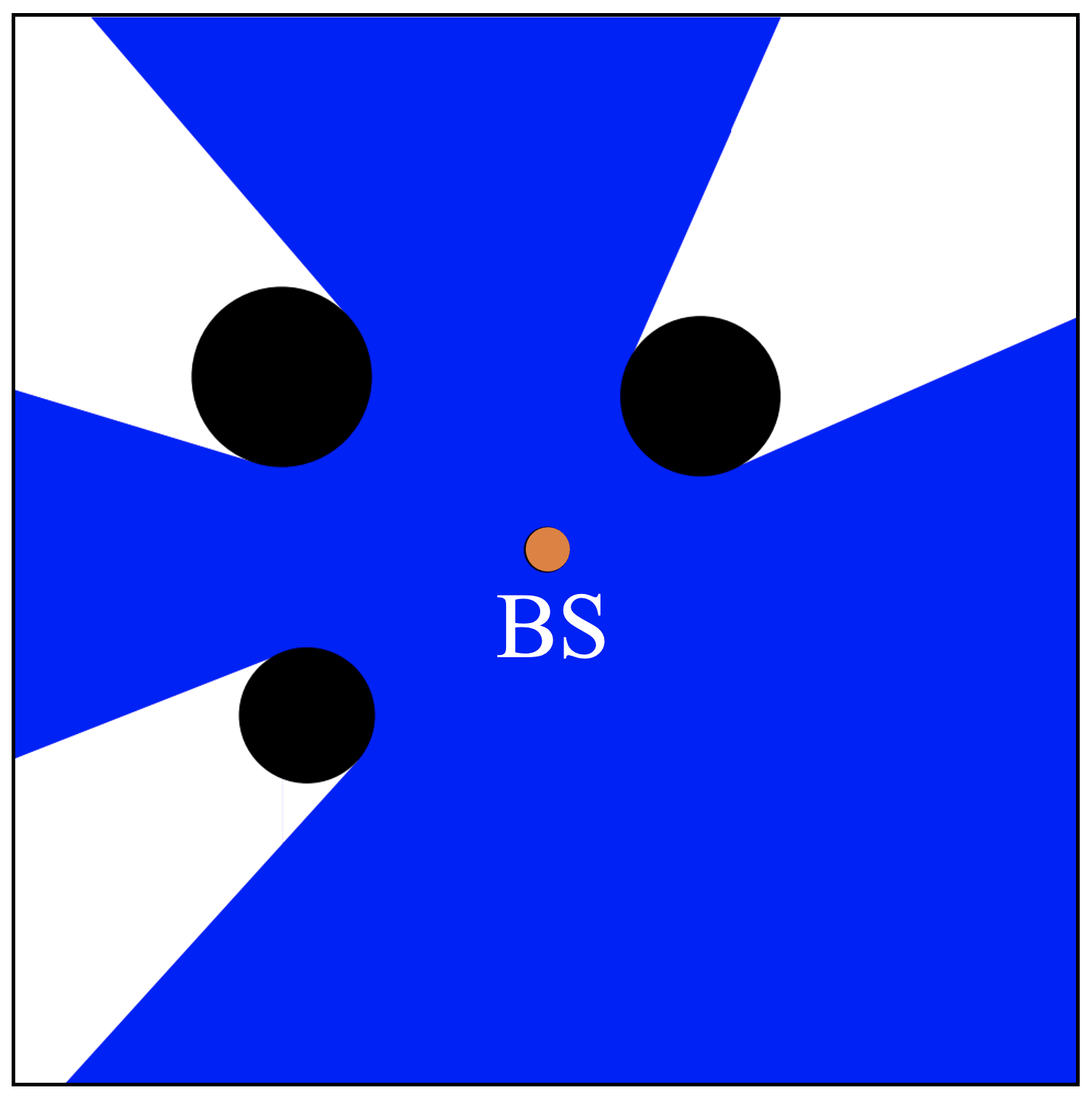}
    \label{coverage_ex1}
    }\quad
    \subfigure[$2$ RISs (circular).]{
    \includegraphics[width=0.2\columnwidth]{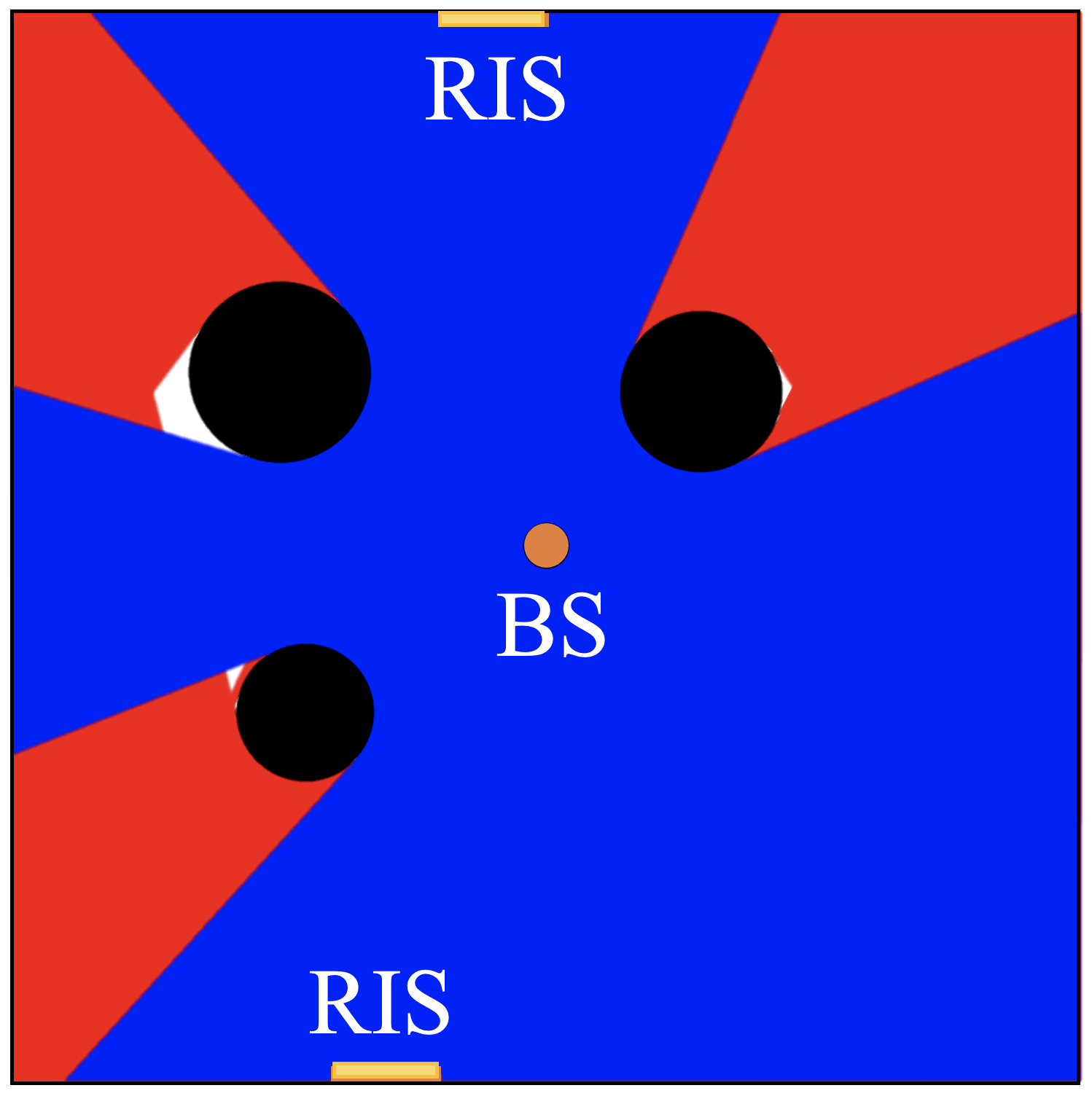}
    \label{coverage_ex2}
    }
    \subfigure[No RIS (wall-type).]{
    \includegraphics[width=0.2\columnwidth]{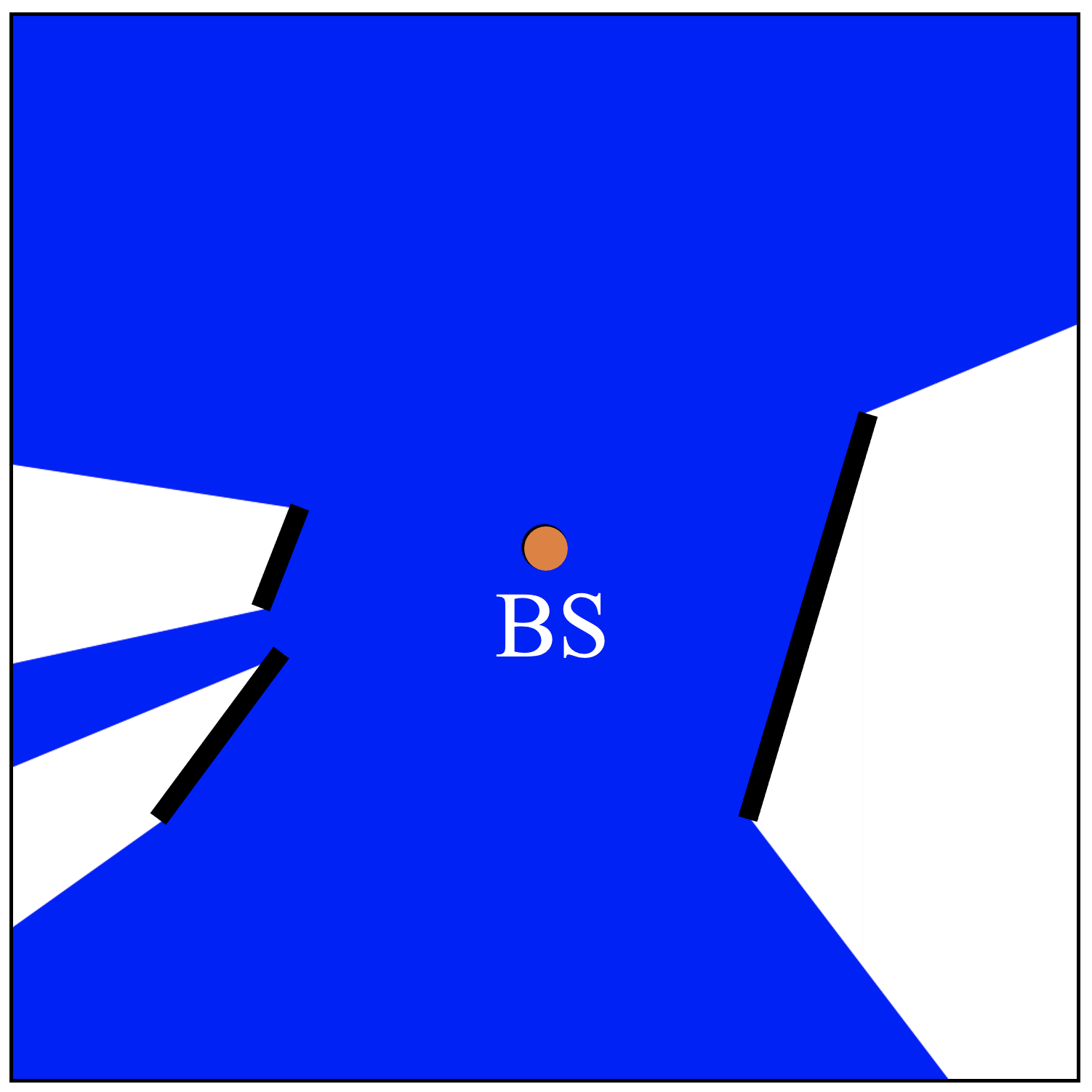}
    \label{coverage_ex3}
    }\quad
    \subfigure[$2$ RISs (wall-type).]{
    \includegraphics[width=0.2\columnwidth]{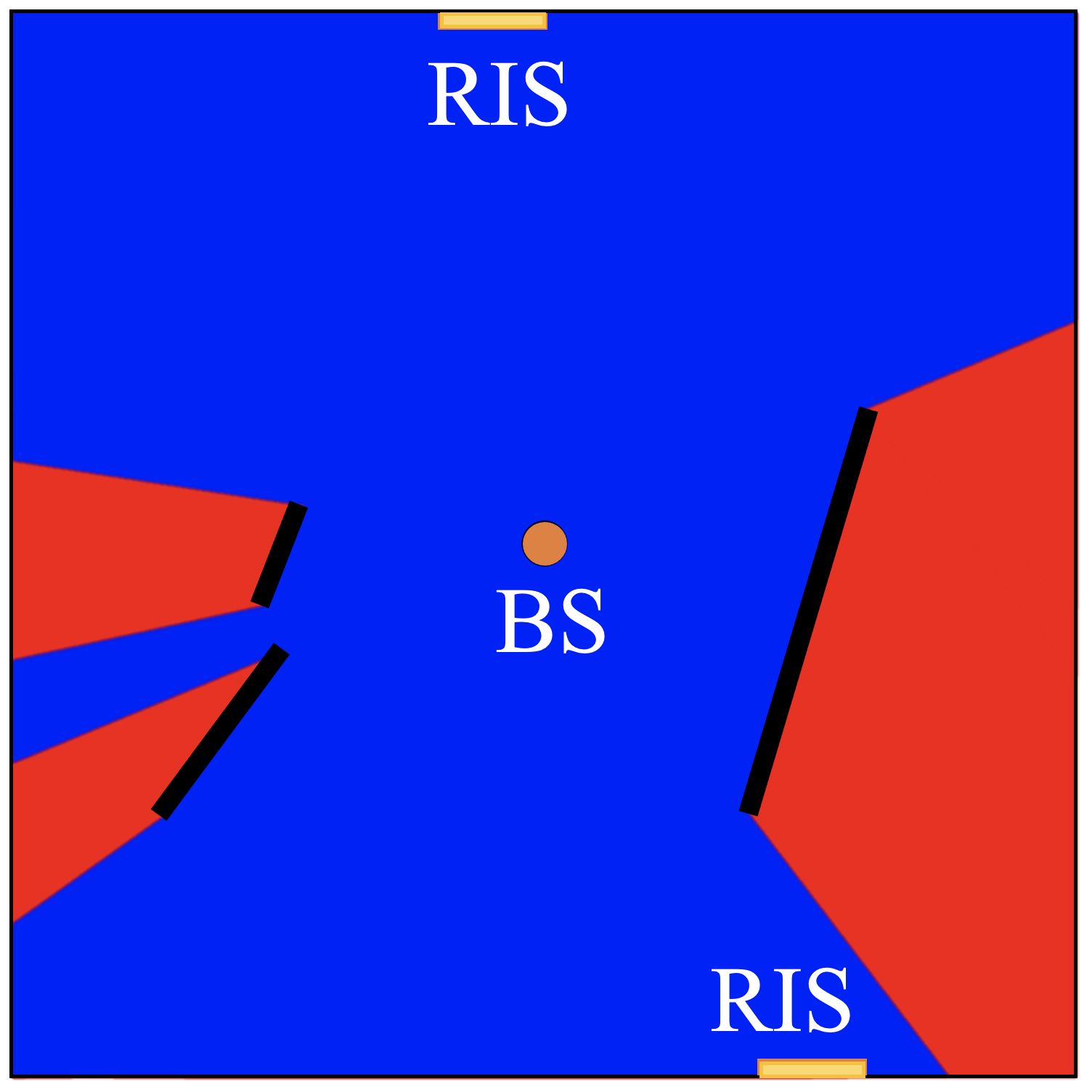}
    \label{coverage_ex4}
    }
    \caption{Comparison of the coverage regions with and without RISs.}
    \label{coverage_ex}
\end{figure*}

Since the coverage area depends on the area of $\mathcal{S}\setminus \mathcal{O}$, we introduce the normalized coverage area as $\operatorname{Area}\big(\mathcal{C}\left(\{\mathbf{q}_j\}_{j\in[1:J]}\right)\big)/\operatorname{Area}(\mathcal{S}\setminus \mathcal{O})$, where its maximum value is limited by one.
Then we present the normalized coverage area averaged over large enough indoor environment samples to capture the average performance in Figs. \ref{coverage_circletype} and \ref{coverage_walltype}.
Specifically, Fig. \ref{coverage_circletype} plots the average normalized coverage area for circular obstacles with respect to the number of RISs.
As seen in the figures, the coverage area achievable by `Full search in $\mathbf{Q}^*$' is very close to that of `Optimal', demonstrating that the proposed construction method of $\mathbf{Q}^*$ can efficiently reduce the number of RIS candidate positions while preserving the achievable coverage close to its optimal value. 
Also compared with the `Random' case, by carefully optimizing RIS positions, we can improve the coverage area by $20$ to $30$ percent for harsh indoor environment with the same number of RISs, see Fig. \ref{circletype:second}. Fig. \ref{coverage_walltype} plots the average normalized coverage area for wall-type obstacles with respect to the number of RISs and similar tendencies can be observed.

\begin{figure}%
\centering
\subfigure[$3$ circular obstacles ($I=3$).]{%
\label{circletype:first}%
\includegraphics[height=2in]{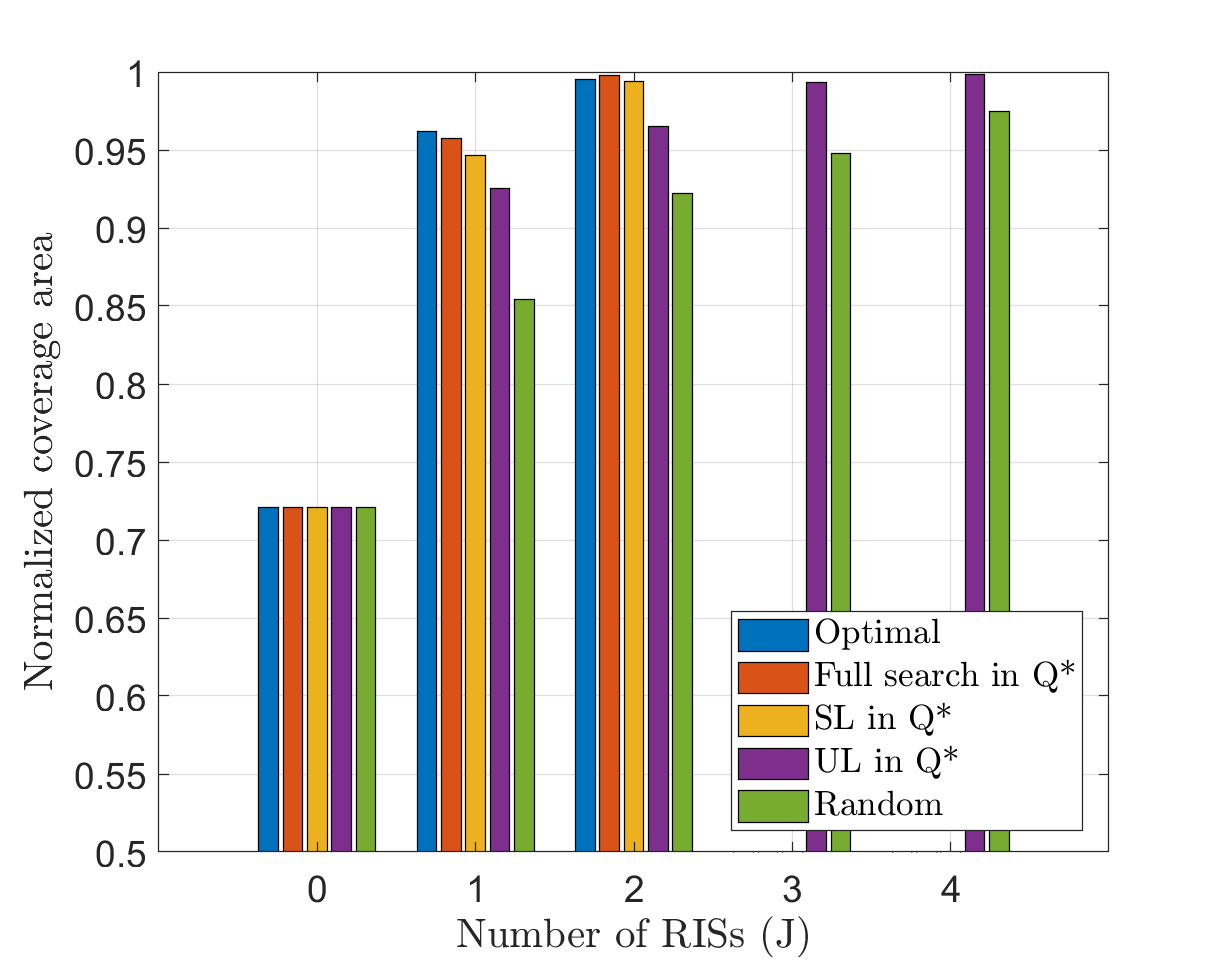}}%
\qquad
\subfigure[$5$ circular obstacles ($I=5$).]{%
\label{circletype:second}%
\includegraphics[height=2in]{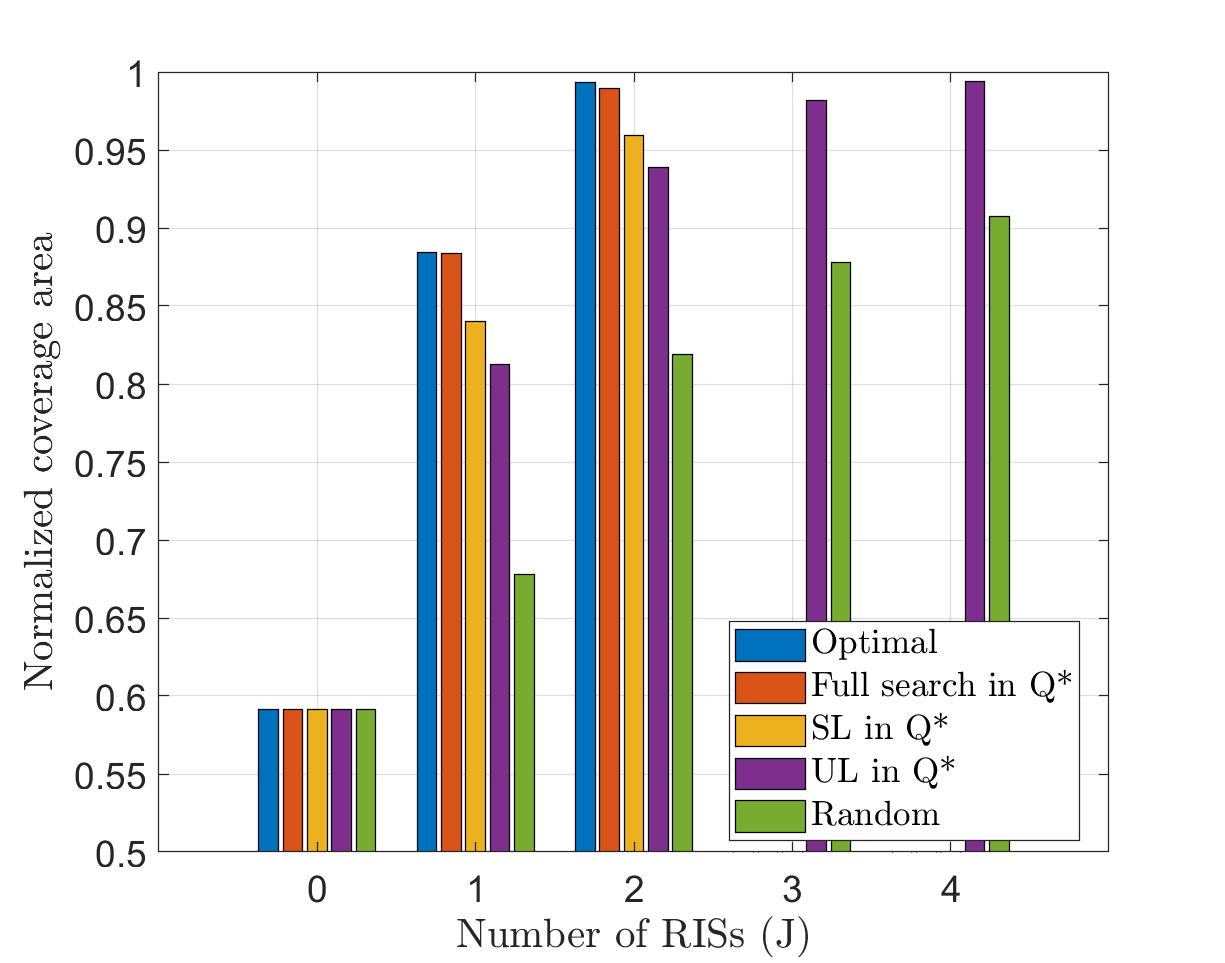}}%
\caption{Average normalized coverage with respect to the number of RISs for circular obstacles.}
    \label{coverage_circletype}
\end{figure}

\begin{figure}%
\centering
\subfigure[$3$ wall-type obstacles ($I=3$).]{%
\label{walltype:first}%
\includegraphics[height=2in]{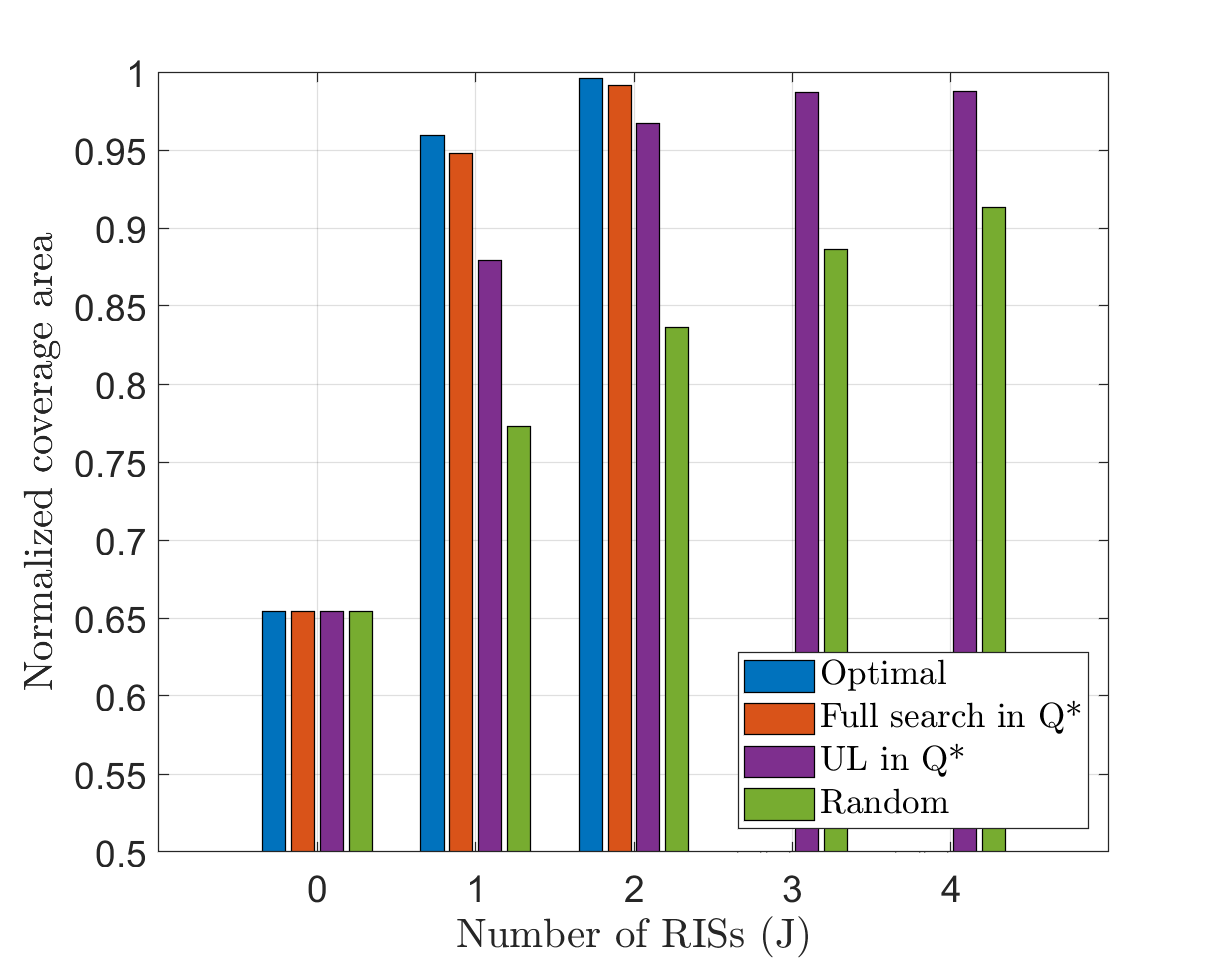}}%
\qquad
\subfigure[$5$ wall-type obstacles ($I=5$).]{%
\label{walltype:second}%
\includegraphics[height=2in]{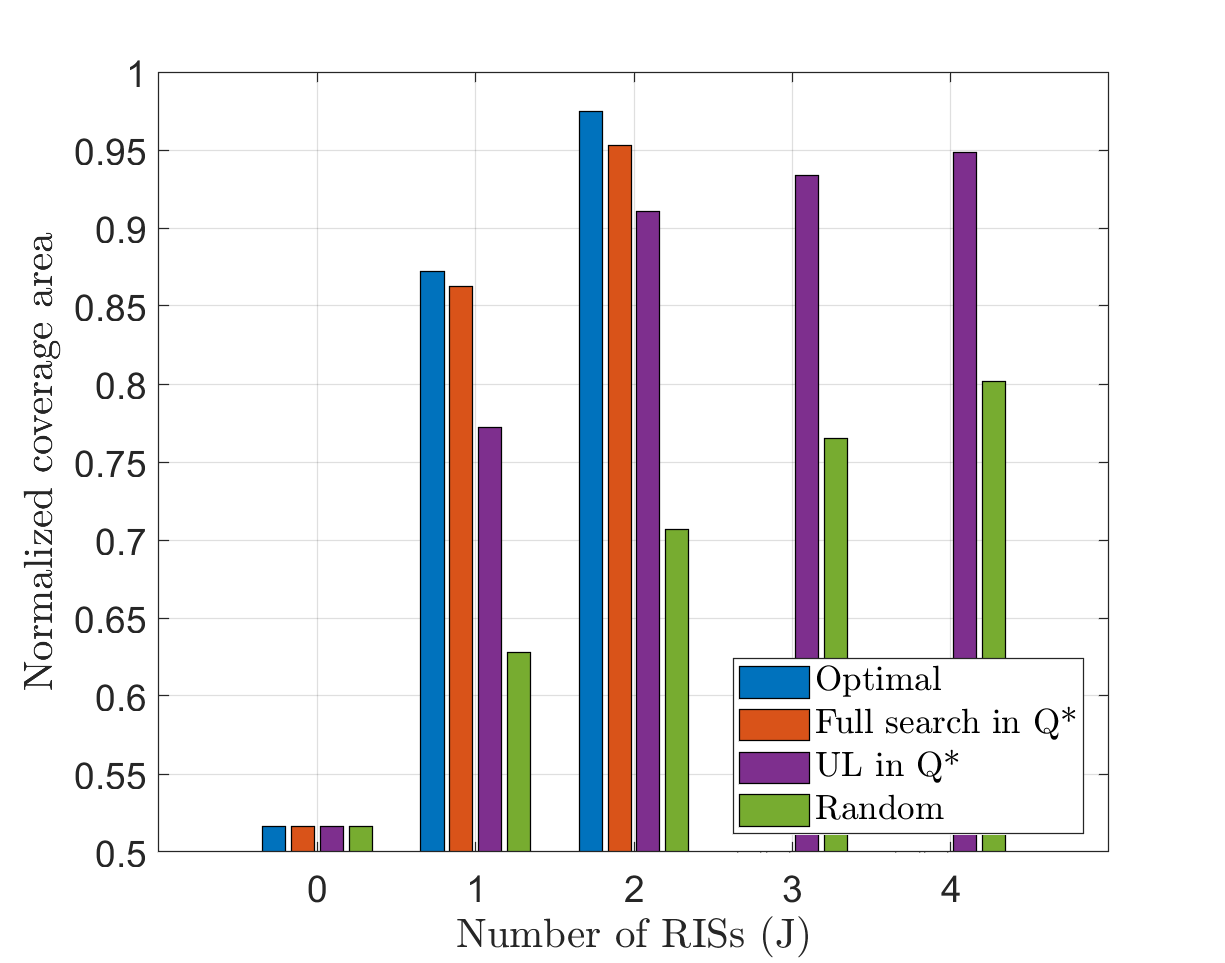}}%
\caption{Average normalized coverage with respect to the number of RISs for wall-types obstacles.}
    \label{coverage_walltype}
\end{figure}

\subsection{Evaluation of Sum Rate} \label{subsec:sum_rate}
In this subsection, we evaluate the achievable sum rate of the proposed RIS-aided hybrid beamforming scheme in Section \ref{sec:scheme_beamtraining}. 
The main simulation parameters for 3D indoor network environment are summarized in Table \ref{tab:environment_3d}.

\begin{table}\caption{3D indoor network parameters.}\label{tab:environment_3d}
\begin{center}
\scalebox{0.85}{
\begin{tabular}{c|c}
\hline
\multicolumn{1}{c|}{\bf Parameters} & \multicolumn{1}{c}{\bf Values} \\
\hline
\hline
 {Network region {($\mathcal{S}$)}} & {${[0,10{~}\text{m}]}\times{[0,10{~}\text{m}]}\times{[0,3{~}\text{m}]}$} \\
\hline
{BS position ($\mathbf{q}_0)$} & {$[5,5,3]$} \\
\hline
{BS orientation ($\alpha_0,\beta_0$)} & {xy-plane} \\
\hline
{Number of users $(K)$} & {$2$} \\
\hline
{Path loss exponent ($\gamma$)} & {$2$}  \\
\hline
{Channel model} &  {Saleh--Valenzuela model} \\
\hline
{Number of multi-paths } & {$3$} \\
\hline
{Variance of NLoS components ($\sigma_L^2$)} & {$-30$ dB} \\
\hline
{BS antenna $(L_1,L_2,M_1,M_2)$} & {$(2,1,4,4)$}\\
\hline
{Elements for each RIS $(N_1,N_2)$} & {$(8,8)$}\\
\hline
{RIS orientation ($\alpha_j,\beta_j$)} & {xz- or yz- plane}\\
\hline
\end{tabular}}
\end{center}
\end{table}

\subsubsection{Benchmark schemes}
For a performance upper bound, we consider the case of coherent analog beamforming at the BS and each RIS denoted by `Upper bound' in this subsection, which requires the exact CSI of BS--user channels and RIS--user channels. By comparing with `Upper bound', the effectiveness of the proposed beam scanning at the BS and each RIS can be verified.
For a performance lower bound, we consider the case that reflection coefficients are set randomly for all RISs denoted by `RND coefficient' in this subsection. For this case, the BS only performs beam scanning while RISs randomly set their reflection coefficients. By comparing with `RND coefficient', the effectiveness of the proposed RIS assignment and the corresponding construction of reflection coefficients can be verified.
Lastly, we also consider the case where there is no RIS denoted by `No RIS'.

In order to demonstrate the effect of the RIS placement on sum rates, we consider three RIS placement strategies: `OPT', `UL', and `RND' in this subsection.
Here, `OPT', `UL', and `RND' corresponds to the cases of `Optimal', `UL in $\mathcal{Q}^*$', and `Random' in Section \ref{subsec:coverage}, respectively.
More specifically, we firstly convert the 3D network into the corresponding 2D network and optimize the x- and y-axis position of RISs and then set the z-axis position as a predetermined value ($1.5$ m in simulation). 

\subsubsection{Comparison of sum rates}
Fig. \ref{fig:sum_rates_circular} plots the average sum rates with respect to SNR for circular obstacles, where $J=2$, $I=5$, $V_1=V_2=M$, and $W_1=W_2=N$ are assumed.
The same obstacle distributions in Section \ref{subsec:coverage} are used in simulation.
As seen in Fig. \ref{circletype:first1}, the average sum rate achievable by the proposed BS and RIS beam scanning is very close to its upper bound of coherent beamforming. Furthermore, by comparing `Proposed, OPT', `Proposed, UL', and `Proposed, RND', it can be seen that improved coverage areas provide improved sum rates. Also, the sum rate of `Proposed, UL' is very close to that of `Proposed, OPT' showing that the RIS placement optimization based on the proposed unsupervised learning with $\mathcal{Q}^*$ works very well for the sum rate maximization. From Fig. \ref{circletype:second2}, it can be seen that the proposed RIS assignment and then the construction of RIS reflection matrices to the direction of the served users can significantly improve sum rates compared to the case where the BS only performs hybrid beamforming combined with randomly selected RIS coefficients.   
Fig. \ref{fig:sum_rates_wall} plots the average sum rates with respect to SNR for wall-type obstacles and similar tendencies can be observed.

\begin{figure}%
\centering
\subfigure[Comparison with `Upper bound'.]{%
\label{circletype:first1}%
\includegraphics[height=2.2in]{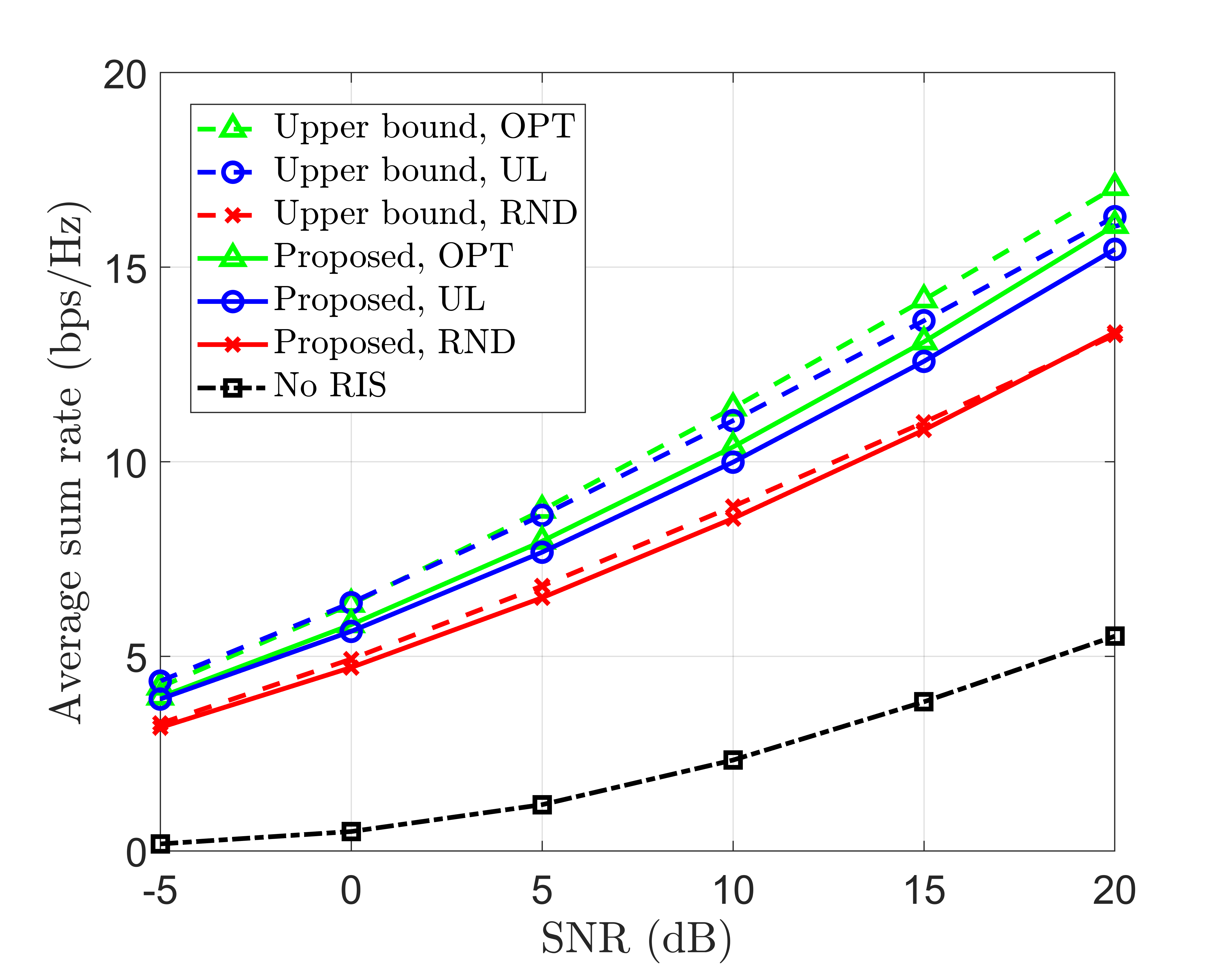}}%
\qquad
\subfigure[Comparison with `RND coefficient'.]{%
\label{circletype:second2}%
\includegraphics[height=2.2in]{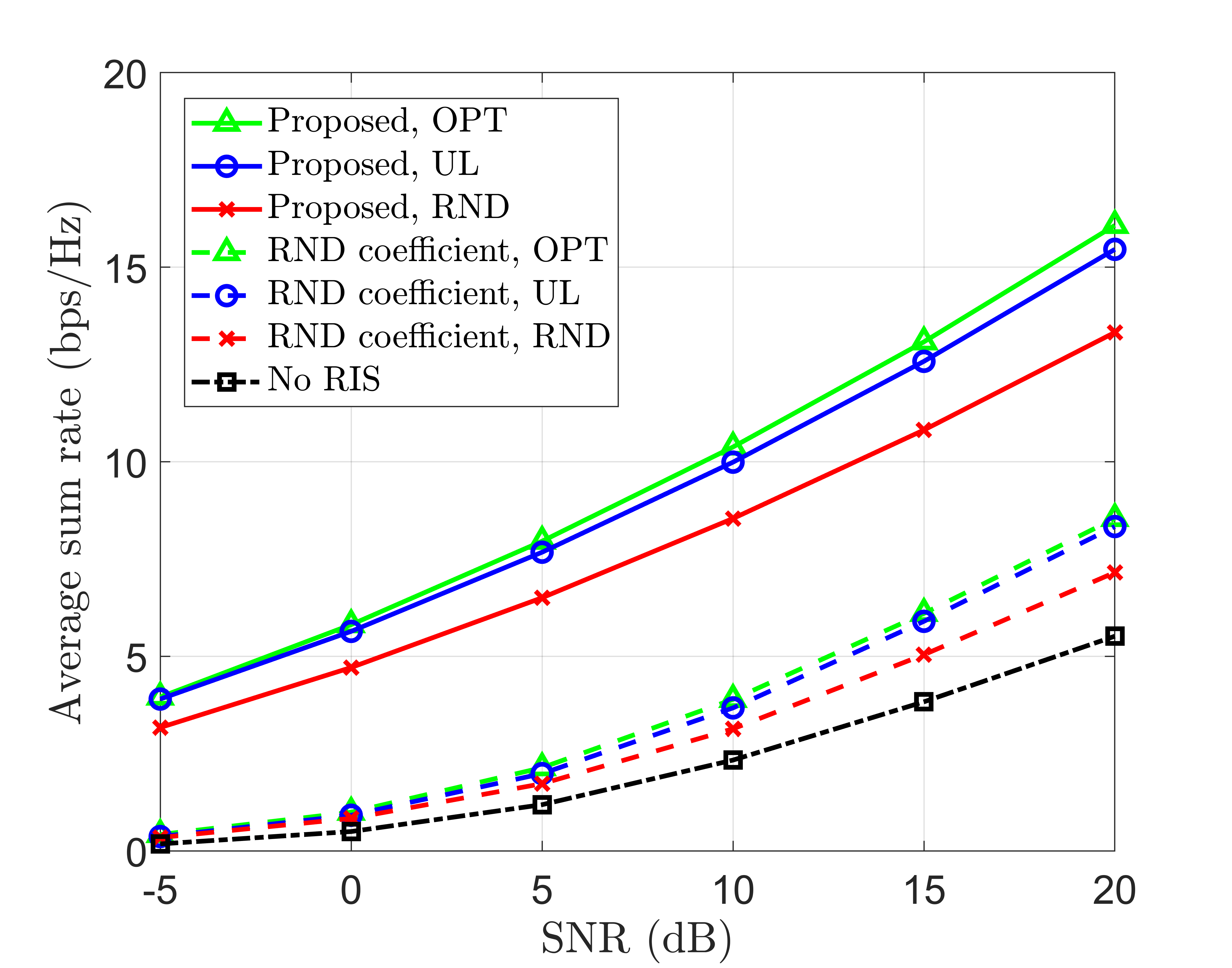}}%
\caption{Average sum rates with respect to SNR for circular obstacles.}
\label{fig:sum_rates_circular}
\end{figure}

\begin{figure}%
\centering
\subfigure[Comparison with `Upper bound'.]{%
\label{walltype:first1}%
\includegraphics[height=2.2in]{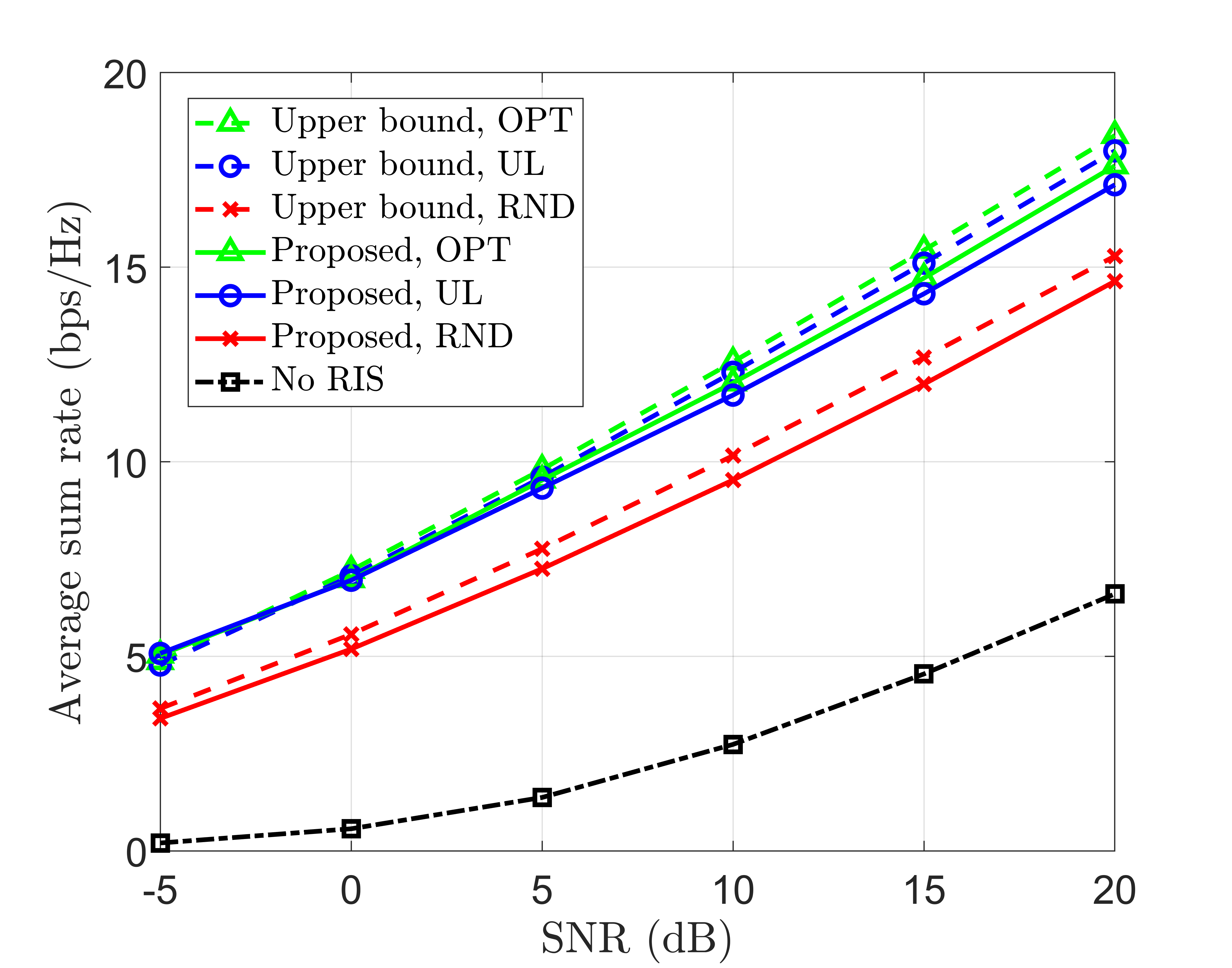}}%
\qquad
\subfigure[Comparison with `RND coefficient'.]{%
\label{walltype:second2}%
\includegraphics[height=2.2in]{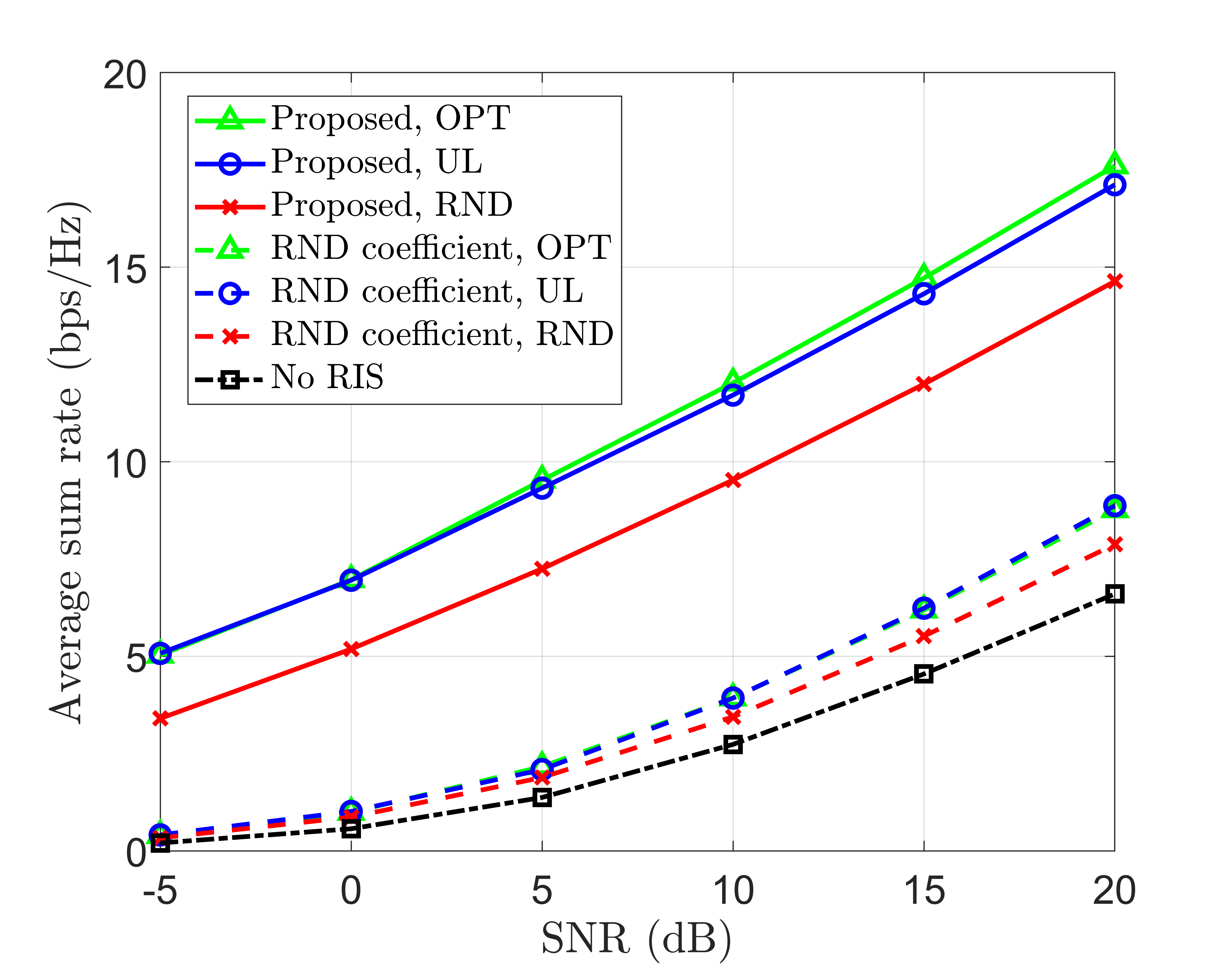}}%
\caption{Average sum rates with respect to SNR for wall-type obstacles.}
\label{fig:sum_rates_wall}
\end{figure}

\section{Conclusion} \label{sec:conclusion}
In this paper, we studied RIS-aided hybrid beamforming systems for mmWave or THz indoor communication.
Unlike most previous works, we considered the joint optimization of both RIS placement and the hybrid beamforming construction including the RIS reflection procedure.
Numerical results demonstrated that optimally deployed RISs can efficiently improve both the coverage areas and sum rates, which are crucially important for indoor ultra-high frequency communication. 

An interesting direction for future work would be to extend the current framework to the case of multiple BSs and then jointly optimize the placement of both BSs and RISs. A key technique issue will be how to construct an optimal or near-optimal placement method with low computational complexity, scalable with an increasing number of BSs and/or RISs.  

\bibliographystyle{IEEEtran}
\bibliography{IEEEabrv,References}

\end{document}